\documentclass[11pt]{article}

\usepackage[a4paper,margin=2.7cm]{geometry}
\usepackage{graphicx}
 \usepackage{multirow} 
\usepackage{natbib}
\usepackage{amsthm}
\usepackage{amsmath}
\usepackage{bbold}
\usepackage{booktabs}
\usepackage{makecell}
\usepackage{paralist}
\usepackage{xcolor}
\usepackage{colortbl}
\usepackage{steinmetz}

\newtheorem{proposition}{Proposition}
\usepackage{anyfontsize} %
\usepackage{colortbl}   
\usepackage{xcolor}     
\usepackage{steinmetz}

\usepackage{natbib}
\usepackage{bbold}
\usepackage{booktabs}
\usepackage{makecell}
\usepackage{amsmath}
\usepackage{paralist}

\DeclareMathAlphabet{\mathpzc}{OT1}{pzc}{m}{it}

\newcommand{\LR}{\textrm{LR}}

\graphicspath{{./images}}

\title{A Decision-Theoretic Framework for Comparing Likelihood Ratio Methods for the Rare Type Match Problem}

\author{
Giulia Cereda\thanks{Department of Statistics, Computer Science, Applications (DiSIA), University of Florence, Italy}
\and
Fabio Corradi\footnotemark[1]
\and
Cecilia Viscardi\thanks{Department of Economics and Statistics (DISES), University of Salerno, Italy. Corresponding author: cviscardi@unisa.it}
}

\date{}

\begin{document}
\maketitle

\begin{abstract}The rare type match problem is a challenging situation faced by a forensic statistician who aims at  providing the value of a match between some characteristic of a crime stain and the corresponding characteristic of a suspect's stain when this characteristic has not been observed before.
Several methods have been designed in the literature to assess likelihood ratios for the rare type match case when evidence consists of a  Y-STR profile found on the crime scene matching the Y-STR profile of a designated suspect. 
We develop a general Bayesian decision-theoretic framework for quantifying the expected cost of alternative approaches using the logarithmic scoring rule. The framework  provides a novel formalization of the posterior cross-entropy, which explicitly enhances the contribution of the method-specific strategy used to extract information from the data. We revisit existing decompositions of posterior cross-entropy and introduce a new complementary decomposition that provides a more interpretable characterization of this contribution. This work compares nine different methods by assessing their performance using  empirical validation experiments. Its ultimate goal is to provide forensic experts and triers of fact with methodological insight and a pre-experimental guide to these alternative approaches.
\end{abstract}

\noindent\textbf{Keywords:}
Forensic statistics; Rare type match problem; Likelihood ratio;
Decision theory; Information theory; Y-STR

\maketitle


\section{Introduction}\label{sec:intro}

  In forensic DNA analysis, a question is often whether a DNA stain found at a crime scene originates from an identified source, typically the suspect.  
Although the DNA profiles from the two sources may be identical, such correspondence does not constitute conclusive proof of identity. An alternative hypothesis, generally supported by the defence, is that the stain may have been left by another individual with an identical profile.

The \emph{likelihood ratio} (LR) is defined as the ratio of the probabilities of observing the evidence under the prosecution and defence hypotheses. 
The computation of the LR requires an assessment of the rarity of the observed DNA profile in a relevant population of possible alternative sources that in principle should be clearly specified. This becomes challenging when a database search reveals that the matching profile has not been previously observed — the so-called \emph{rare type match problem} \citep{cereda:2015}, also referred to as the fundamental problem of forensic statistics \citep{brenner:2010}. This situation frequently arises with Y-chromosome Short Tandem Repeat (Y-STR) haplotypes, where the number of possible haplotypes is extremely large leading to sparse database coverage. 

Several methods have been proposed to evaluate the LR associated with a  Y-STR match, see \citet{andersen:2021} for a review.  
This work focuses on a subset of those methods that can be applied in the rare type match context, including the \emph{$\kappa$-method} \citep{brenner:2010, brenner:2014}, the \emph{generalised Good method} \citep{cereda:2015b}, the \emph{two-parameter Poisson–Dirichlet method} \citep{cereda:2019, cereda:2022}, the \emph{augmented count} and its \emph{Bayesian} version \citep{cereda:2015}, the \emph{Discrete Laplace model} considering one or more than one subpopulations \citep{andersen:2013b},   and the \emph{Graphical Model  approach} \citep{andersen:2018}.  
The first three of the list were specifically developed for rare type matches, while the others are  more generally applicable.  
We first present a systematic overview of these methods and classify them according to the type of information they extract from the data, namely their method-specific summaries. 

Within forensic science, the LR is widely accepted as the logically coherent framework for expressing the strength of evidence \citep{robertson:2016, aitken:2004, buckleton:2020, taroni:2021}. It is  used across Europe— where is embedded in evaluative reporting frameworks such as those promoted by European Network of Forensic Science Institutes  \citep{enfsi:2015} and the UK Forensic Science Regulator \citep{fsr:2023}—as well as in Australia and New Zealand \citep{catoggio:2019} and is also employed in the United States, particularly in forensic DNA interpretation  \citep{swgdam:2018}. More generally, recent reviews of forensic science have stressed the need for empirical validation studies and quantitative assessment of the performance and error rates of forensic methods before their use in court \citep{pcast:2016, nrc:2009}. This requirement motivates the development of principled frameworks for evaluating and comparing alternative likelihood-ratio based methods such as the one we present here in which we provide both methodological and empirical contributions to the evaluation of  methods for rare type match problems based on Bayesian decision-theoretic framework. 
The framework relies on posterior cross-entropy \citep{cover:2012}, a measure of the expected logarithmic cost originally introduced to forensic inference by \cite{brummer:2006} and only rarely applied in the forensic DNA context \citep{vanlierop:2024}. We introduce a novel formulation that explicitly characterizes the contribution of the method-specific summaries employed by each method to posterior cross-entropy and its components.

We also revisit existing decompositions of posterior cross-entropy, based on method-specific summaries \citep{brocker:2009, degroot:1982}, focusing on the notions of \emph{potential score}, \emph{reliability}, and \emph{resolution}. We further propose a complementary decomposition that explicitly isolates a new component, named  \emph{coarseness}, thereby providing additional insight into the role of the information retained by the method-specific summaries.

Finally, we comparatively assess the performance of the nine methods through an extensive simulation study, using different population's  sizes and rates of success in reproduction. The empirical analysis provides insights into the strengths and limitations of the different approaches, contributing to a more informed comparison of alternative methods from a pre-experimental perspective.

The paper is organized as follows: Section~\ref{sec:rare} formalizes  the rare type match problem and summarizes the methods we compare; Section~\ref{sec:evaluation} gives account of some results belonging to information and decision theory, defines the evaluation framework and introduces the novel decomposition. Section~\ref{sec:empposcross}  provides   the empirical counterpart of such tools. Details on simulations required for the empirical evaluation of the methods are in  Section~\ref{sec<:details}, followed, in Section~\ref{sec:results}, by the main results we achieved in the experiments. Conclusions and directions for future research are in Section~\ref{sec:conclusions}.

\paragraph*{Notation} -- 
Throughout the paper, we use the following notation.
Random variables and their realizations are denoted by uppercase and lowercase letters, respectively.
With a slight abuse of notation, $p(\cdot)$ is used for discrete variables, while $\Pr(\cdot)$ refers to events.

Unless otherwise stated, $\log$ denotes the base-2 logarithm.
$H(X)$ is the entropy of $X$, $D(\cdot; \cdot)$ the Kullback–Leibler divergence between two distributions, and the $\mathcal{P}$ operator stands for the expectation  with respect to the data distribution.

\section{The rare type match problem for Y-STR profiles and proposals for LR evaluation}\label{sec:rare}
Y-STRs are polymorphic loci on the Y chromosome, the sex chromosomes transmitted exclusively from father to male offspring \citep{butler:2010}. Each Y-STR locus contains a short repeated sequence of nucleotides (e.g., AGGT), and individuals differ according to the number of repetitions (e.g., AGGT–AGGT–AGGT). The number of repeats at each locus defines an   \emph{allele}, while a Y-STR \emph{haplotype} is the combination of alleles across several carefully chosen loci.

The widespread method to type DNA profiles uses loci on non-sexual chromosomes, called autosomal, but Y-STR haplotypes provide a valuable tool to identify male-specific DNA in mixtures, such as vaginal swabs after sexual assaults, where autosomal STR profiles are often dominated by the victim’s DNA \citep{gill:2001}. Using male-specific markers such as Y-STR overcomes this limitation and enables targeted identification of a possible assailant.

Two important aspects of Y-STR inheritance are worth emphasizing. First, all   males belonging to the same patrilineal lineage share the same Y-STR haplotype unless a mutation occurs, typically through the gain or loss of a single repeat unit, and this is due to the fact that Y-STR haplotypes are inherited from father to son, as a non-recombining block. From this, it follows that loci are not independent and the probability of a haplotype cannot be derived by simply multiplying allele probabilities \citep{caliebe:2015}.  
 
As a result, because databases capture only a small fraction of the full space of possible Y-STR haplotypes, estimating haplotype frequencies directly from database counts is problematic, and the rare type match problem is particularly common in this context.

\subsection{Formal definition of the problem}
We consider the simplest case of a single male contributor to a 
mixed biological sample, where the use of Y-STR markers is motivated 
by the predominance of female DNA that precludes reliable autosomal 
profiling of the male component. The recovered Y-STR profile matches 
that of a known suspect, and we assume that there are no laboratory errors, and only one male contributes to the mixture. We will refer to the mixture as "stain" in the forthcoming sections.  

Let $H \in \{h_p, h_d\}$ denote the unobserved variable representing the two hypotheses: the prosecution one, ``the source of the male DNA is the suspect'' ($h_p$), and the defence one, ``the source of the male DNA is a random man from the relevant population'' ($h_d$) \citep{bright:2024}. 

The available data can be expressed as a realization $y_{1:n+2}$ of the random variables $Y := Y_{1:n+2}$, where:
\begin{itemize}
    \item $y_{1:n}$ is the reference database of $n$ Y-STR profiles from the relevant population;
    \item $y_{n+1}$ is the Y-STR profile from a stain recovered from the crime scene;
    \item $y_{n+2}$ is the Y-STR profile of the suspect;
    \item $y = \{y_{1:n}, y_{n+1}, y_{n+2}\}$ denotes the full set of available data.
\end{itemize}

We emphasize that the reference database is treated as part of the evidence to be evaluated, in line with a rigorous full Bayesian approach.
According to this notation, the likelihood ratio is expressed as
\[
\LR = \frac{p(y \mid h_p)}{p(y \mid h_d)}.
\]

In the rare type match problem, the observed pattern is such that $y_{n+1} \neq y_i \forall i \in \{1,\ldots,n\}$, and $y_{n+1} = y_{n+2}$. We denote $\tilde{y} := y_{n+1} = y_{n+2}$ as the rare haplotype.

According to the distinction proposed by \cite{Vergeer23}, the present setting belongs to the class of trace-reference problems for which a representative sample from the population of potential sources is available. While the propositions of interest are specific-source propositions, the existence of a population database makes it possible to model the suspect profile as a random realization from the same population represented by the database. We adopt this perspective throughout the paper. Therefore, although the inferential problem is specific-source in nature, the probabilistic treatment of the suspect profile is closer to a common-source formulation, where uncertainty about the source characteristics is explicitly represented rather than conditioned upon as fixed and known.

\subsection{A selection of proposals}
\label{sec:selection}

This paper compares a set of methods that can be used to compute the  LR in the rare type match problem, when the most straightforward approach—using the reciprocal of the relative frequency $n_{\tilde{y}}$ of the matching profile in the reference database is infeasible, since $n_{\tilde{y}}=0$.

Each method, denoted by $m$, is associated with a statistical model that defines a probability distribution $p_m(y,h)$.
Accordingly, each method produces an estimate of the LR based on its own summary statistics, denoted by $s_m(y)$.  
We write $\widehat{\LR}_m$ for the likelihood ratio obtained by a frequentist estimation of the model parameters.  
For Bayesian methods, the hat notation, $\widehat{\cdot}$, is omitted since parameters are integrated out.

Below, we briefly describe each method.

\begin{description}
\item[Augmented Count method(AC)] — first introduced by \cite{balding:2005}.  
The method assumes that $y_i $ are i.i.d. and follow a categorical distribution indexed by the parameter vector $\phi$, whose elements represent the population probabilities of the  haplotypes.  
The corresponding likelihood ratio is
\begin{align}
\label{eq:LR}
\LR_{\text{AC}} 
&= \frac{p(y \mid h_p)}{p(y \mid h_d)} 
= \frac{p_{\text{AC}}(y_{n+2} \mid y_{1:n+1}, h_p)p_{\text{AC}}(y_{1:n+1} \mid h_p)}{p_{\text{AC}}(y_{n+2} \mid y_{1:n+1}, h_d)p_{\text{AC}}(y_{1:n+1} \mid h_d)} \notag \\
&= \frac{1}{p_{\text{AC}}(y_{n+2} \mid y_{1:n+1}, h_d)} 
= \frac{1}{\phi_{\tilde{y}}},
\end{align}
where the last equality follows from the assumption that $p(y_{n+2} \mid y_{1:n+1}, h_p) = 1$ in case of match, when measurement errors are neglected, and from the fact that the first $n+1$ observations are independent of the hypotheses.
The parameter $\phi_{\tilde{y}}$ is estimated using the relative frequency of the crime stain’s profile in $y_{1:n+1}$ (i.e., the database including the crime stain but excluding the suspect’s profile):
\[
\hat{\phi}_{\tilde{y}} = \frac{n_{\tilde{y}} + 1}{n + 1} = \frac{1}{n + 1},
\]
since $n_{\tilde{y}}=0$.  
This yields the frequentist estimate:
\begin{equation}
\label{eq:AC}
\widehat{\LR}_{\text{AC}} = \frac{1}{\hat{\phi}_{\tilde{y}}}=n + 1.
\end{equation}
Hence, for rare type matches, the AC method produces an LR entirely determined by the database size $n$ and it is insensitive to the actual content of the reference database and to the specific profile $\tilde{y}$.  
In such cases, the resulting LR is necessarily larger than that obtained when the matching haplotype is observed more than once in the database.

\item[Bayesian Augmented Count method (B-AC)] — proposed by \cite{cereda:2015}.  
This approach extends the AC method by adopting a Bayesian framework. It assumes that the number of occurrences of the matching haplotype $\tilde{y}$ in the database follows a Binomial distribution with parameter $\phi_{\tilde{y}}$, whose prior distribution is $\mathrm{Beta}(\alpha, \beta)$.  
Under the standard non informative choice $\alpha = \beta = 1$, and exploiting the conjugacy of the Beta–Binomial model, the likelihood ratio is given by
\begin{align}
\label{eq:BAC}
\LR_{\text{B-AC}} 
&= \frac{1}{p_{\text{B-AC}}(\tilde{y} \mid y_{1:n+1}, h_d)}  \nonumber\\
&= \frac{1}{\int_{0}^{1} p_{\text{B-AC}}(\tilde{y} \mid \phi_{\tilde{y}}, h_d)\, p_{\text{B-AC}}(\phi_{\tilde{y}} \mid y_{1:n+1})\, d\phi_{\tilde{y}}} \notag \\[4pt]  
&= \frac{n + \alpha + \beta + 1}{n_{\tilde{y}} + \alpha + 1}
= \frac{n + 3}{n_{\tilde{y}} + 2}
= \frac{n + 3}{2}. 
\end{align}
The first equality follows from \eqref{eq:LR} and $\LR_{\text{B-AC}}$ is always smaller than $\LR_{\text{AC}}$, making it a more conservative approach—an appealing property in forensic contexts.

\item[Two-parameter Poisson–Dirichlet method (2PD)] — proposed by \cite{cereda:2019, cereda:2022}.  
This method provides a nonparametric Bayesian solution to the rare type match problem.  
It assumes an infinite number of possible Y-STR profiles in the population and models the vector of their ordered relative frequencies as following a Poisson–Dirichlet (PD) distribution \citep{pitman:1992b} with parameters $(\alpha, \theta)$, where $\alpha \in (0,1)$ and $\theta > -\alpha$.
The data $y_{1:n+2}$ are reduced to a partition $\pi_{[n+2]}$ of the set $\{1, \ldots, n+2\}$, where each subset contains the indices of observations sharing the same Y-STR profile.  
The relevant statistic for LR computation is the partition of $n+1$, $\pi_{n+1} := (n_1, n_2, \dots, n_k)$, where $n_i$ denotes the number of classes of size $i$ and $k$ is the largest class size in $\pi_{[n+1]}$.
The Pitman sampling formula \citep{pitman:1995} gives the probability of a random partition $\Pi_n$ as a function of the PD parameters $(\alpha, \theta)$:
\begin{equation}\label{ChEPPtF}
\Pr(\Pi_{n} = \pi_{n} \mid \alpha, \theta)
= \frac{[\theta + \alpha]_{k-1; \alpha}}{[\theta + 1]_{n-1; 1}}
  \prod_{j=1}^{k} [1 - \alpha]_{n_j - 1; 1},
\end{equation}
where $[\cdot]_{k,\alpha}$ indicates the ascending factorial.
As derived by \cite{cereda:2022}, the likelihood ratio can   be expressed as
\begin{align}\label{df1}
\LR_{\text{2PD}}
&= \frac{p(y \mid h_p)}{p(y \mid h_d)}
 = \frac{p_{\text{2PD}}(\pi_{[n+2]} \mid h_p)}
        {p_{\text{2PD}}(\pi_{[n+2]} \mid h_d)}\nonumber \\ 
&= \frac{1}
        {\int_0^1 \int_{-\alpha}^{+\infty} \frac{1 - \alpha}{n + 1 + \theta}\,
               p(\alpha, \theta \mid \pi_{[n+1]})\, d\alpha\, d\theta}.
\end{align}
The LR can be evaluated in two ways: 

\begin{itemize}
\item[(a)] \textbf{Fully Bayesian approach.}
Posterior samples $\{(\alpha_j, \theta_j)\}_{j=1}^{J}$ are drawn from $p(\alpha, \theta \mid \pi_{[n+1]})$ using an Approximate Bayesian Computation algorithm, yielding the approximation
\[
\LR_{\text{2PD}} \approx 
\frac{J}{\sum_{j=1}^{J} \frac{1 - \alpha_j}{n + 1 + \theta_j}}.
\]

\item[(b)]  \textbf{Empirical approach.}
The maximum likelihood estimates $(\widehat{\alpha}_{\mathrm{MLE}}, \widehat{\theta}_{\mathrm{MLE}})$ are substituted directly into \eqref{df1}, giving
\begin{equation}\label{eq:PDMLE}
\widehat{\LR}_{\text{2PD}} 
= \frac{n + 1 + \widehat{\theta}_{\mathrm{MLE}}}
       {1 - \widehat{\alpha}_{\mathrm{MLE}}},
\end{equation}
where the summary statistics in the likelihood for ($\alpha, \theta$) are $n$ and $\pi_{n+1}$.
\end{itemize}
\end{description}

The next two methods, in addition to disregarding the genetic information contained in the $n+2$ observed profiles, also ignore the specific profile labels, retaining only two relational events among them:
\begin{itemize}
\item $S$ (singleton): the crime stain profile $y_{n+1}$ is different from all values in $y_{1:n}$;
\item $M$ (match): the suspect’s profile matches the crime stain, i.e., $y_{n+2} = y_{n+1}$.
\end{itemize}

\begin{description}
\item[Generalised Good method (GG)] — introduced by \cite{cereda:2015b}.  
This nonparametric approach defines the LR as a ratio of the probabilities of the events $S$ and $M$ under the prosecution and defence hypotheses:
\begin{equation}\label{eq:lrgg}
\LR_{\text{GG}} = 
\frac{\Pr_{\text{GG}}(S, M \mid h_p)}{\Pr_{\text{GG}}(S, M \mid h_d)}
= \frac{\Pr_{\text{GG}}(S \mid h_p)}
        {\Pr_{\text{GG}}(S, M \mid h_d)}.
\end{equation}
The numerator of \eqref{eq:lrgg} is estimated using the Good–Turing estimator, $n_1 / n$, where $n_1$ is the number of singletons in $y_{1:n}$.  
The denominator, corresponding to the joint occurrence of $S$ and $M$, is estimated as $2n_2 / (n(n-1))$, where $n_2$ is the number of doublets.  
Approximating $n-1$ with $n$, we obtain
\[
\widehat{\LR}_{\text{GG}} = \frac{n\, n_1}{2 n_2},
\]
which is a quasi-unbiased estimator of $\LR_{\text{GG}}$, (see \citet{cereda:2015b} for details).  
The method thus relies on three summary statistics: $n_1$, $n_2$, and $n$.

\item [Brenner’s kappa method (BK)] — proposed by \cite{brenner:2010, brenner:2014}.  
This approach expresses the LR as
\[
\LR_{\text{BK}} =
\frac{\Pr_{\text{BK}}(S, M \mid h_p)}
     {\Pr_{\text{BK}}(S, M \mid h_d)}
= \frac{1}{\Pr_{\text{BK}}(M \mid S, h_d)},
\]
exploiting the independence of $S$ and $M$ and the fact that $\Pr(M \mid h_p) = 1$.  
Conditionally on $S$, the event $M$ is decomposed into the intersection of three sub-events:
\begin{description}
\item[$M_1$:] $y_{n+2}$ is not a singleton in $y_{1:n+1}$, estimated by $\Pr(M_1 |S, h_d) = 1 - k$, where $k = (n_1 + 1)/(n + 1)$ is the fraction of singletons in the augmented database;
\item[$M_2$:] $y_{n+2}$ matches a singleton in $y_{1:n+1}$, with $\Pr(M_2 |M_1, S, h_d) = k$;
\item[$M_3$:] $y_{n+2}$ matches exactly $y_{n+1}$, with $\Pr(M_3 | M_1, M_2, S, h_d) = 1/(n_1 + 1)$.
\end{description}
The product of these three probabilities gives
\[
\widehat{\LR}_{\text{BK}} =
\frac{n + 1}{1 - k}
= \frac{(n + 1)^2}{n - n_1}.
\]
The BK method therefore requires only the pair of summary statistics $(n, n_1)$, representing a further reduction of data compared with the GG and 2PD methods.

\item [Discrete Laplace method (DL and DL-mix)] — as described in \citep{ andersen:2018b}.  
Under a Fisher–Wright population model, \cite{caliebe:2010} showed that the allele distribution at each Y-STR locus can be approximated by a \emph{Discrete Laplace distribution}:
\[
f(d; m, r) = \frac{1 - r}{1 + r} \, r^{|d - m|},
\]
where $d$ is the number of repetitions (allele), $m$ and $r$ denote location and dispersion parameters.  
\cite{andersen:2013b} later provided empirical evidence that the population frequency of an entire haplotype can be approximated by the product of per-locus allele probabilities.
 Given a sample of haplotypes \(\{d^i\}_{i=1}^n\), where each $d^i$ is an haplotype with $k$ loci, the maximum likelihood estimates (MLE) of the Discrete Laplace parameters for each locus $j=1, ..., k$, are given by  
\[
\hat{m}_j = \text{median} \{d^i_j\}_{i=1}^n, \quad \hat{r_j} = g\left(sda_j\right),
\]  
where  $sda_j=\frac{1}{n} \sum_{i=1}^n |d^i_j - \hat{m}_j|$, is the mean absolute deviation from the median and $g(\cdot)$ is a known deterministic function. 
As a consequence, estimating the probability of observing a haplotype with \( k \) loci in the reference population requires a total of \( 2k \) summary statistics. Given the suspect Y-STR haplotype \( (d_1^s, \dots, d_k^s) \), the likelihood ratio (LR) is estimated as the reciprocal of the haplotype probability:  
\[
\widehat{\text{LR}}_{\text{DL}} = \frac{1}{\prod_{j=1}^k \frac{1-\hat{r}_j}{1+\hat{r}_j} \hat{r}_{j}^{{|d_j^s - \hat{m}_j|}}}.
\]  
An extension, called \emph{Disclapmix} (DL-mix), models haplotype frequencies as a mixture of Discrete Laplace components, each corresponding to a potential subpopulation.  
DL-mix also estimates the posterior probability that the suspect originates from each subpopulation. 


 \item [Graphical Model approach (GM)] — proposed by \cite{andersen:2018}.  
It is well known that, for Y-STR data, the transmission of haplotypes across generations induces dependencies among loci.  
Estimating their full joint distribution would require a prohibitive number of parameters and an extensive dataset.  
To address this, \cite{andersen:2018} proposed a simplified model in which dependencies are restricted to a tree structure.  
In this framework, each locus is conditionally dependent on at most one other locus, allowing the use of the Chow–Liu algorithm to identify the pairwise dependencies that maximize mutual information.  
For a set of $k$ loci, having estimated the parameters $\Pi$ of the corresponding conditional probability tables, the population probability of observing a haplotype identical to that of the suspect, with alleles $d = \{d_1, \dots, d_k\}$, is given by
\[
p_{\text{GM}}(d \mid \hat{\Pi}) = \prod_{j=1}^{k} p_{\text{GM}}(d_j \mid \mathrm{pa}(d_j), \hat{\Pi}),
\]
where $\mathrm{pa}(d_j)$ denotes the parent locus of $d_j$ in the tree.  
Accordingly, the estimated likelihood ratio is
\[
\widehat{\LR}_{\text{GM}} = 
\frac{1}{\prod_{j=1}^{k} p_{\text{GM}}(d_j \mid \mathrm{pa}(d_j), \hat{\Pi})}.
\]
To estimate $\hat{\Pi}$ tables of alleles counts are required, denoted $L_j$ in the forthcoming text.

\item[The empirical benchmark LR 
(EB)] — If the entire population were known, and under the i.i.d. categoric distribution assumption,  the exact likelihood ratio would be
\begin{equation}\label{eq:LRgold}
\LR = \frac{1}{\phi_{y_{n+1}}},
\end{equation}
where $\phi_{y_{n+1}}$ denotes the population relative frequency of haplotype $y_{n+1}$.  
In our simulation framework, this quantity will be available and used as a \emph{benchmark} for evaluating the different  methods.
 
\end{description}

\subsection{A classification of the proposals}
\label{sec:firstresult}

As a preliminary outcome of our analysis, we propose a classification of the reviewed methods into three groups (Table~\ref{tab:met}), according to how they exploit the information contained in the data.

\emph{Group 1} includes the Augmented Count (AC) method and its Bayesian version (B-AC).  
These two methods are the most extreme in terms of data reduction, using only the database size and the frequency of the matching haplotype.

\emph{Group 2} includes Brenner’s $\kappa$-method (BK), the Generalised Good (GG) method, and the two-parameter Poisson–Dirichlet (2PD) models.  
These approaches deliberately ignore the specific allelic composition of haplotypes and instead exploit the spectrum of haplotype counts (e.g., singletons, doublets, etc.).

\emph{Group 3} includes the Discrete Laplace (DL, DL-mix) and Graphical Model (GM) methods, which make full use of genetic information by modelling the allele distributions at each locus across all observed haplotypes, exploiting information about the specific allele numbers.

In contrast with Group~3, the methods in Groups~1 and~2 treat haplotypes as categorical labels without genetic meaning.  
Moreover, while Groups~1 and~3 aim to estimate the LR as the reciprocal of the haplotype frequency, the methods in Group~2 directly estimate or approximate the LR itself.  
This diversity of perspectives highlights the importance of developing a unified framework to compare their performance.
\begin{center}
\begin{table*}[ht!]
\caption{Classification of the reviewed methods according to their level of data reduction.}
\centering
\begin{tabular}{|p{1.5cm}|l|p{10.3cm}|}
\hline
\textbf{Group} & \textbf{Method} & \textbf{Description} \\
\hline
\multirow{2}{*}{Group 1}
 & AC & Augmented Count \\
 & B-AC & Bayesian Augmented Count\\
\hline
\multirow{4}{*}{Group 2} 
 & 2PD$\_$Emp & Two-parameter Poisson Dirichlet method with  MLE estimates   \\
 & 2PD  & Two-parameter Poisson Dirichlet method with ABC inference  \\
 & BK & Brenner’s $\kappa$-method \\
 & GG & Generalised Good method \\
\hline
\multirow{3}{*}{Group 3}  
 & DL & Discrete Laplace \\
 & DL-mix & Disclapmix allowing for subpopulations   \\
 & GM & Graphical Model assuming tree-structured dependences \\
\hline
\end{tabular}
\label{tab:met}
\end{table*}
\end{center}

In Table \ref{tab:es2} are the data concerning a toy example of a rare type match problem case, using  three loci haplotypes.
 In Table \ref{tab:es} we illustrate the relevant statistics for each methods. 
 \begin{center}
\begin{table*}[!ht]%
\centering
\small
\caption{The reference data base made of ten 3-loci Y-STR haplotypes for the toy example}
\label{tab:es2}

\begin{tabular}{c}
\toprule
$\begin{pmatrix}12\\13\\15\end{pmatrix}
 \begin{pmatrix}12\\11\\14\end{pmatrix}
 \begin{pmatrix}12\\11\\14\end{pmatrix}
 \begin{pmatrix}12\\11\\14\end{pmatrix}
 \begin{pmatrix}12\\13\\15\end{pmatrix}
 \begin{pmatrix}12\\11\\14\end{pmatrix}
 \begin{pmatrix}12\\11\\14\end{pmatrix}
 \begin{pmatrix}11\\10\\14\end{pmatrix}
 \begin{pmatrix}10\\12\\14\end{pmatrix}
 \begin{pmatrix}10\\12\\14\end{pmatrix}$\\
\bottomrule
\end{tabular}
\end{table*}
\end{center}
\begin{center}
\begin{table*}[ht!]
\centering
\small
\caption{Statistics required 
by each method to evaluate the LR. }
\label{tab:es}
\setlength{\extrarowheight}{4pt}
\begin{tabular}{ll}
\toprule
  \textbf{Method}  & \textbf{Statistics required} \\
\midrule
\multicolumn{2}{l}{\textit{Group 1 — count-based methods}} \\[2pt]
 AC              & $n=10,\ n_{\tilde{y}}=1$ \\
 B-AC     & $n=10,\ n_{\tilde{y}}=1$ \\
\midrule
\multicolumn{2}{l}{\textit{Group 2 — spectrum-based methods}} \\[2pt]
2PD
  & $n_1=1,\ n_2=2,\ n_3=0,\ n_4=0,\ n_5=1$ \\
 GG        
  & $n_1=1,\ n_2=2,\ n=10$ \\
BK       
  & $n_1=1,\ n=10$ \\
\midrule
\multicolumn{2}{l}{\textit{Group 3 — allele-frequency methods}} \\[2pt]
DL
  & $\hat{m}_1=12,\ \hat{m}_2=11,\ \hat{m}_3=14$;\quad
    $sda_1=0.5,\ sda_2=0.7,\ sda_3=0.2$ \\[6pt]
GM
  & $L_1=\begin{pmatrix}10&2\\11&1\\12&7\end{pmatrix}$\quad
    $L_2\mid L_1=\begin{pmatrix}
      & 10 & 11 & 12\\
      10 & 0  &  1 &  0\\
      11 & 0  &  0 &  5\\
      12 & 2  &  0 &  0\\
      13 & 0  &  0 &  2
    \end{pmatrix}$\quad
    $L_3\mid L_2=\begin{pmatrix}
      & 10 & 11 & 12 & 13\\
      14 & 1  &  5 &  2 &  0\\
      15 & 0  &  0 &  0 &  2
    \end{pmatrix}$ \\
\bottomrule
\end{tabular}
\end{table*}
\end{center}
\section{The evaluation of the proposals according to information theory}
\label{sec:evaluation}
In this section, we describe how tools developed within Bayesian decision theory \citep{degroot:2004} and information theory \citep{cover:2012} can be applied to evaluate and compare the methods introduced in Section~\ref{sec:selection}.  
Although these theories operate on probability distributions, while the methods here under consideration  output likelihood ratios, the two representations are directly related through
\[
p_{m}(h_p \mid y) = \frac{\LR_m \cdot \frac{p(h_p)}{p(h_d)}}{1 + \LR_m \cdot \frac{p(h_p)}{p(h_d)}}.
\]
  Since the posterior probability $p_m(h_p \mid y)$ depends on the prior odds we compare the methods across a common grid of prior probabilities $p(h_p)$, thereby avoiding dependence on any particular prior choice.

Within this framework, Bayesian decision theory provides a principled way to evaluate the quality of the posterior probabilities produced by each method. It does so through the use of \emph{scoring rules}, which quantify the cost associated with a reported posterior distribution $p_m(\cdot \mid y)$ for a specific case.
In particular, the logarithmic scoring rule proposed by \citet{good:1952} defines the cost of reporting $p_m(\cdot \mid y)$ when the true hypothesis is $h$ as
\begin{equation}\label{eq:log}
C_{\mathrm{log}}[p_m(\cdot\mid y); h] = -\log\big(p_m(h\mid y)\big).
\end{equation}

If  the posterior probability indicates with certainty the true hypothesis,  the cost is zero; 
otherwise, the cost is strictly positive and increases as $p_m(h_p\mid y)$ decreases.  

The logarithmic score is a strictly proper scoring rule, meaning that its expected value is uniquely minimized when the reported posterior distribution matches the true underlying one. Thus, it encourages honest reporting of beliefs rather than strategic distortion.  
 
This scoring rule imposes increasingly severe penalties as the assigned posterior probability for the true hypothesis decreases, diverging to infinity when that probability approaches zero. In the forensic context, where misleading certainty is particularly undesirable, this strictness makes the logarithmic score more suitable than other  alternatives, such as the quadratic score \citep{brier:1950}.
 
 The use of posterior cross-entropy as an evaluation criterion follows the decision-theoretic tradition based on strictly proper scoring rules \citep{good:1952, degroot:1982, brummer:2013, ramos:2013}. Similar ideas have recently been employed by \citet{vergeer:2021} to compare alternative classes of forensic LR systems. In the present work, the same principle is adopted for the rare type match problem.
 
 \subsection{Posterior cross-entropy}
\label{Sec:GeneEntropy} 
Equation~\eqref{eq:log} defines the cost associated with a reported posterior for a fixed $(h, y)$.  
To evaluate the overall performance of $p_m(h \mid y)$, Bayesian decision theory considers the expected cost with respect to both random variables $H$ and $Y$, taking into account the uncertainty about which hypothesis is true and about the possible observations.  
This leads  to a well-known information-theoretic measure, the \emph{posterior cross-entropy} ($\mathcal{P}CE$) adopted in this study:
\begin{align}
\mathcal{P}CE(p(h\mid y); p_m(h\mid y))
&= -\sum_{y \in \mathcal{Y}}\sum_{h\in\{h_p,h_d\}} p(h, y)\, \log p_m(h\mid y) \nonumber \\
&= -\sum_{h\in\{h_p,h_d\}} p(h)\sum_{y \in \mathcal{Y}} p(y \mid h)\, \log p_m(h\mid y) \label{xxd2}\\
&= -\sum_{y \in \mathcal{Y}} p(y)\sum_{h\in\{h_p,h_d\}} p(h \mid y)\, \log p_m(h\mid y), \label{xxd}
\end{align} that can be decomposed in several equivalent forms, each providing complementary insights into the different characteristics of the methods under comparison.
In \eqref{xxd2} the joint probability $p(h, y)$ is expressed as $p(h)p(y\mid h)$, this latter termed the \emph{reference distribution} for the data. In  \eqref{xxd} $p(h, y)$ is expressed as $p(y)p(h\mid y)$,  where $p(h\mid y)$ represents the corresponding \emph{reference posterior}.  
In practice, the reference distribution $p(y\mid h)$ is typically unknown, so it is  $p(h\mid y), $ unless we are  within a simulation framework.

\paragraph*{A reformulation of the classical decomposition —} From \eqref{xxd}, adding and subtracting the term $\log p(h\mid y)$, it follows that 
\begin{eqnarray}\label{eq:eqGoodrisk}
\mathcal{P}CE(p(h\mid y); p_m(h\mid y))=& \sum_{y \in \mathcal{Y}} p(y) \sum_{h\in\{h_p,h_d\}}p(h\mid y)\log\left(\frac{p(h\mid y)}{p_m(h\mid y)}\right)  
  +\sum_{y \in \mathcal{Y}}p(y)\sum_{h\in\{h_p,h_d\}}p(h\mid y)\log \frac{1}{p(h\mid y)}\nonumber\\
=&  \mathbb{E}_Y\big[D(p(h\mid  y);p_m(h\mid  y))\big]
 +\mathbb{E}_Y\big[H(p(h\mid y))\big].
\end{eqnarray}

The first component of  \eqref{eq:eqGoodrisk} is the expected value, with respect to $p(y)$, of the KL divergence between the reference posterior and the posterior produced by a generic method $m$.  
This decomposition highlights that methods yielding posterior distributions farther (in the KL sense)  from the reference posterior incur a higher expected cost.  
The second component corresponds to the expected entropy of the reference posterior, representing a baseline cost that cannot be reduced.

Let \( \mathcal{X}_m \) denote the set of distinct posterior values \( p_m(h_p \mid y) \) produced by method \( m \).  
For each \( x \in \mathcal{X}_m \), define
\[
\mathcal{Y}_m^x = \{\, y \in \mathcal{Y} : p_m(h_p \mid y) = x \,\},
\]
the subset of observations that yield the same posterior probability \( x \) for \( h_p \).  
The space \( \mathcal{Y} \) can be partitioned into the subsets \( \{ \mathcal{Y}_m^x \}_{x \in \mathcal{X}_m} \), and this  allows us to express \( \mathcal{P}CE \) by conditioning on the induced values \( x \) rather than on individual observations \( y \). Indeed, since each method \( m \) produces posterior probabilities based on  statistics \( s_m(y) \), it assigns the same value \( p_m(h_p \mid y) \) to all observations yielding the same \( s_m(y) \).

The transition from \( \mathcal{Y} \) to \( \mathcal{X}_m \) has four main consequences:
\begin{enumerate}
 \item The conditional distributions \( p(h \mid x) \), expressed in terms of the coarser variable \( x \),  correspond to averages of \( p(h \mid y) \) over the sets \( \mathcal{Y}_m^x \):
\[
p(h_p\mid x)=p(h_p \mid \mathcal{Y}_m^x)
=\frac{\sum_{y \in \mathcal{Y}_m^x} p(h_p \mid y)\, p(y)}
{\sum_{y \in \mathcal{Y}_m^x} p(y)}, \quad \forall x\in \mathcal{X}_m, 
\] and thus no longer retains the full granularity of the reference posterior across individual observations.  See Appendix~\ref{app:A1} for details.
This loss of detail is inherent to the method \( m \) and will be considered by decomposition \eqref{eq:ceci3}. 
    \item Although both \( p(h \mid y) \) and \( p(h \mid x) \) are generally unknown, the latter involves averages over subsets of \( \mathcal{Y} \), which makes it more amenable to empirical estimation, as will be illustrated in equation \eqref{eq:calemp} and in the subsequent discussion.
    \item This framework also allows for a formal definition of calibration.  
A method is said to be \emph{well-calibrated} with respect to a reference distribution \( p \) if
\begin{equation}
p_m(h_p \mid x) = p(h_p \mid x) = x, \quad \forall x \in \mathcal{X}_m.
\end{equation}
Thus, for each method \( m \) and each \( x \in \mathcal{X}_m \),  
\( p(h_p \mid x) \) represents the calibrated version of \( p_m(h_p \mid x) \), i.e., the probability that the reference posterior assigns when conditioning on the set \( \mathcal{Y}_m^x \).

\item Finally, under this formulation, equation~\eqref{eq:eqGoodrisk} can be rewritten as (see Appendix~\ref{app:A2}):
\begin{align}
\label{eq:2}
\mathcal{P}CE(p(h\mid y); p_m(h\mid y)) 
&= \mathbb{E}_{X_m}\big[D(p(h\mid  x); p_m(h\mid x))\big]
+\mathbb{E}_{X_m}\big[ H(p(h\mid  x))\big]\\
&= \text{Reliability loss} + \text{Potential score}. \nonumber
\end{align}
\end{enumerate}

The reliability loss quantifies miscalibration by measuring the expected KL divergence between the method’s posterior $p_m(h \mid x)$ and its well-calibrated counterpart $p(h \mid x)$.  
The potential score represents the expected cost associated with the well-calibrated version of $p_m$, reflecting the minimum achievable cost if the method were perfectly calibrated.

Below, we present two additional decompositions of the posterior cross entropy, based on two articulations of  the potential score    providing additional insights.

\paragraph*{The Br\"ocker decomposition —}

A further decomposition of the expected cross-entropy, first introduced by \cite{murphy:1973} for the Brier scoring rule, was extended to the logarithmic scoring rule by \cite{brocker:2009}. It takes the form:
\begin{align}
\label{eq:brocker}
\mathcal{P}CE(p(h\mid y); p_m(h\mid y))
&= \mathbb{E}_{X_m}\left[D\big(p(h\mid x);\, p_m(h\mid x)\big)\right]
  + H\big(p(h)\big)
  - \mathbb{E}_{X_m}\left[D\big(p(h\mid x);\, p(h)\big)\right] \\
&= \text{Reliability loss} + \text{Prior Uncertainty} - \text{Resolution}. \nonumber
\end{align}

This decomposition, that is derived in details in Appendix~\ref{app:A3}, separates the potential score into two components that depend on the prior over the hypotheses.  
The prior uncertainty is the entropy of the prior $p(h)$.  
The resolution is the expected KL divergence between the prior and the well-calibrated posterior; it quantifies the information gain achieved by updating the prior with the calibrated posterior of the method.
This ability depends on how the method partitions the data space $\{\mathcal{Y}_m^x\}_{x\in\mathcal{X}_m}$: partitions that allow posteriors to deviate (in KL sense) strongly from the prior yield to a higher resolution.

Resolution is closely related to discriminating power as explained in  \citet{brocker:2015}.  
A method such that $p_m(h\mid y)=p(h)$ for all $y$—is perfectly calibrated but non informative: its resolution and discriminating power are both zero (the partition collapses to the full space $\mathcal{Y}$).  
This trivial case underscores the need to combine good calibration with a high resolution.


\paragraph*{A novel decomposition —}

We propose an additional decomposition of the posterior cross-entropy (proof in Appendix~\ref{App:A4}):
\begin{align} 
\label{eq:ceci3}
\mathcal{P}CE(p(h\mid y); p_m(h\mid y))
&= \mathbb{E}_{X_m}\left[D\big(p(h\mid x);\, p_m(h\mid x)\big)\right]
  + \mathbb{E}_Y\left[H\big(p(h\mid y)\big)\right]+ \mathbb{E}_Y\left[D\big(p(h\mid y);\, p(h\mid x)\big)\right] \\
&= \text{Reliability loss} + \text{Posterior Uncertainty} + \text{Coarseness}. \nonumber
\end{align}

In Br\"ocker’s formulation \eqref{eq:brocker}, the potential score from \eqref{eq:2} was expressed by the prior entropy reduced by the information gained through the resolution.  
In contrast, our decomposition \eqref{eq:ceci3} views the same quantity as the minimal uncertainty about $H$—the expected entropy of the reference posterior, $\mathbb{E}_Y[H(p(h\mid y))]$—plus a penalty that accounts for the loss of detail introduced by the method’s specific data reduction.

The \emph{Coarseness} term is this penalty: it is the expected KL divergence between the reference posterior at each $y\in\mathcal{Y}$ and the reference posterior conditioned on the subset $\mathcal{Y}_m^x$ containing $y$.  
The coarseness quantifies the information loss due to data reduction and associated statistics, even under perfect calibration.

Finally note that, given two methods (1 and 2, say), it follows from decompositions~\eqref{eq:brocker} and~\eqref{eq:ceci3} that
\begin{equation}
\label{deltares}
\Delta(\text{Potential Score}_{1\text{-}2})= \Delta(\text{Coarseness}_{1\text{-}2}) =  -\Delta(\text{Resolution}_{1\text{-}2})
\end{equation}
so that two different interpretations of the difference in potential score between two methods are available.

\section{The empirical evaluation of the posterior cross entropy}
\label{sec:empposcross}

Up to this point, our discussion has focused on the theoretical formulation of the posterior cross-entropy and its decompositions, but 
since the reference distribution $p(y\mid h)$ is unknown, the posterior cross entropy and its decomposing terms must be estimated using an archive of  supervised cases,  
\[
\{(y^{(i)}, h^{(i)})\}_{i=1}^T,
\]
which are assumed to be distributed according to $p(y,h)=p(y\mid h)p(h)$.  
Here, $y^{(i)}$ denotes the observed data for case $i$, and $h^{(i)}$ the corresponding known true hypothesis.  

For each value of the prior $p(h_p)$ and method $m$, this allows us to compute the \emph{empirical posterior cross-entropy}, denoted $\mathcal{E}CE_m$, which is the Monte Carlo estimate of~\eqref{xxd2}:
\begin{align}
\label{eq:empece}
 \mathcal{E}CE_m 
&=-\frac{p(h_p)}{|C_p|}\sum_{i\in C_p} \log \bigl(p_{m}(h_p\mid y^{(i)})\bigr)
    -\frac{p(h_d)}{|C_d|} \sum_{i\in C_d}\log\bigl(p_{m}(h_d\mid y^{(i)})\bigr) \\[4pt]
\label{ECE0}
&= - \sum_{x\in \mathcal{X}_m}  \Bigl( p(h_p)\, \nu_m(x\mid h_p)\,  \log(x)
       + p(h_d)\, \nu_m(x\mid h_d)\,  \log(1-x) \Bigr)
\end{align}

where $C_p=\{i: h^{(i)}=h_p\}$ and $C_d=\{i: h^{(i)}=h_d\}$.  
Equation~\eqref{ECE0} is obtained by grouping cases according to the distinct posterior values in $\mathcal{X}_m$, with
\[
\nu_m(x\mid h_p) = \frac{\sum_{i=1}^T \mathbb{1}\{h^{(i)} = h_p, y^{(i)} \in \mathcal{Y}_m^x\}}{|C_p|}
\]
denoting the empirical relative frequency of posteriors equal to $x$ among prosecution cases, and analogously for $\nu_m(x\mid h_d)$.

The empirical counterparts of the decompositions in~\eqref{eq:2} and~\eqref{eq:brocker} are then given by:
\begin{align}
\label{eq:sl} 
\mathcal{E}RL_m &=
\sum_{x \in \mathcal{X}_m} \Bigl[
p(h_p)\, \nu_m(x \mid h_p)  \log\left(\frac{\hat{p}(h_p \mid x)}{x}\right) +
p(h_d)\, \nu_m(x \mid h_d)  \log\left(\frac{\hat{p}(h_d \mid x)}{1 - x}\right)
 \Bigr]\\
&=  \sum_{x \in \mathcal{X}_m} \nu_m(x)\, D(\hat{p}(h \mid x); x), \nonumber \\[4pt]
\label{eq:sl2}
\mathcal{E}PS_m &= - \sum_{x \in \mathcal{X}_m} \Bigl[
p(h_p)\, \nu_m(x\mid h_p)\,  \log\left(\hat{p}(h_p \mid x)\right) +
p(h_d)\, \nu_m(x\mid h_d)\,  \log\left(\hat{p}(h_d \mid x)\right)
\Bigr] \\
&= \sum_{x \in \mathcal{X}_m} \nu_m(x)\, H(\hat{p}(h \mid x)), \nonumber \\[4pt]
\label{eq:sl3}
\mathcal{E}R_m &= - \sum_{x \in \mathcal{X}_m} \Bigl[
p(h_p)\, \nu_m(x\mid h_p)\,  \log\left(\frac{\hat{p}(h_p\mid x)}{p(h_p)}\right)+
p(h_d)\, \nu_m(x\mid h_d)\,  \log\left(\frac{\hat{p}(h_d\mid x)}{p(h_d)}\right)
\Bigr] \\
&= \sum_{x \in \mathcal{X}_m} \nu_m(x)\, D(\hat{p}(h \mid x); p(h)). \nonumber
\end{align}

It holds that:

\begin{align}\label{emp_dec}
\mathcal{E}CE_m  &= \mathcal{E}RL_m+ \mathcal{E}PS_m \\
\label{emp_dec2} 
&= \mathcal{E}RL_m - \mathcal{E}R_m + H(p(h)) \nonumber
\end{align}

Here, $\mathcal{E}RL_m$, $\mathcal{E}PS_m$, and $\mathcal{E}R_m$ denote the empirical counterpart of the reliability loss, potential score, and resolution, respectively.
The reference distribution $p(h\mid x)$ represents the ideal, perfectly calibrated posterior.
In practice, this quantity is unknown and is estimated from data using a \emph{calibration method}, which provides the approximation $\hat{p}(h\mid x)$.

A natural empirical estimate is
\begin{equation} 
\label{eq:calemp}
\hat{p}(h\mid x) = \nu_m(h\mid x)
= \frac{\sum_{i=1}^T \mathbb{1}\{h^{(i)} = h,  y^{(i)} \in \mathcal{Y}_m^x\}}
{\sum_{i=1}^T  \mathbb{1}\{y^{(i)} \in \mathcal{Y}_m^x\}},
\quad h\in\{h_p,h_d\}.
\end{equation}
but, as noted by \citet{brocker:2012} and detailed in Appendix~\ref{App:H},
the estimators of the potential score and the reliability loss based on the empirical calibration~\eqref{eq:calemp} are affected by a bias of order $\pm |\mathcal{X}_m|/T$.
This bias arises because the number of simulated cases required to obtain stable estimates of $p(h\mid x)$ increases with the cardinality of $\mathcal{X}_m$:
the larger the partition induced by the method, the more cases are needed to stabilize the empirical frequencies. 

The limitations of the empirical estimator~\eqref{eq:calemp}, particularly its sensitivity to the number of available cases,
have motivated several authors to propose alternative approaches for estimating calibrated probabilities \citep{brummer:2010, ramos:2013}.
In binary classification, such probabilities can be estimated using either parametric or non-parametric methods, as discussed in \citep{filho:2023}.
Parametric methods involve fitting a model (e.g., logistic and beta models) that relates the predicted probabilities \(x\) to the empirical relative frequencies.  
In contrast, non-parametric approaches group the \(x\) values into bins according to specific criteria.  
By pooling multiple \(x\) values within each bin, these methods mitigate instability caused by a limited number of observations in the numerator of~\eqref{eq:pcalm}.  
Among them, the Pool Adjacent Violators (PAV) algorithm  (see Appendix \ref{sec:cal}) ensures that the estimated calibration function is monotonic, thereby preserving the natural ordering of the method’s posterior \(p_m\).   

The PAV algorithm has a particularly important property: as shown by \citet{ayer:1955} and \citet{brummer:2013}, it produces calibrated probabilities that minimize the empirical cross-entropy   
among all monotonic calibration methods.   
Consequently, the corresponding value of $\mathcal{ECE}$  approaches from above the cross-entropy corresponding to perfectly calibrated probabilities—namely, the empirical posterior entropy of \(p(h\mid y)\).  
In this sense, \(\hat{p}_m(h \mid y)\) represents the best attainable approximation of perfect calibration:  
when used to evaluate \(\mathcal{P}CE(\hat{p}_m(h \mid y))\), it provides the closest achievable (and slightly conservative) estimate of the potential score, while minimizing the reliability loss. Once the calibrated probabilities \(\hat{p}_m = p_{\mathrm{PAV},m}\) are obtained, two approaches can be followed.  
The first consists in plugging these values directly into equations~\eqref{eq:sl}–\eqref{eq:sl3}, thereby obtaining empirical estimates of the potential score and the reliability loss.  
Alternatively, one may compute one of these quantities by difference, exploiting the relationships among the decomposition terms.  
In both cases, the resulting estimates of the potential score and the reliability loss are systematically, respectively, over- and under-estimated, reflecting the conservative nature of the PAV calibration.

In the present framework, calibration plays a dual role: it improves reliability and, at the same time, provides the key ingredient required to estimate the decomposition terms of the posterior cross-entropy, especially the potential score.

 Looking at the coarseness term in the decomposition of~\eqref{eq:ceci3}, 
we note that it cannot be evaluated directly, not even empirically, 
since it depends on the reference posterior \(p(h\mid y)\), which—unlike \(p(h\mid x)\)—is neither observable nor estimable.  
However, it is still possible to provide a relative ranking of methods in terms of coarseness by means of pairwise comparisons with a fixed reference method, such as in equation \eqref{deltares}.

\section{Simulation study}
\label{sec<:details}
 Past real cases constitute the most genuine database of supervised cases, 
but they are hardly available—particularly in forensic identification—where the true origin of the traces often remains uncertain, even after a court decision.  

For this reason, we propose conducting a series of controlled simulation experiments based on several different reference populations, each reflecting different demographic scenarios.  
From each simulated population, we can draw numerous supervised cases of rare type matches.  
Each case consists of a sample of 10-loci Y-STR haplotypes from the population (the reference database), augmented by the haplotypes from both the crime scene and the suspect sample.  
The LRs obtained by applying the selected methods to each case are then evaluated and compared using the \emph{empirical posterior cross-entropy} (\(\mathcal{E}CE\)) and its associated decompositions—namely, the \emph{reliability loss}, the \emph{potential score}, the \emph{resolution} and the \emph{coarseness}.  
These quantities provide a principled framework for assessing both the calibration and the discriminative power of each method under controlled yet realistic population scenarios.

\subsection{Simulated population}
The simulation of the different populations is inspired to the method used in \citep{andersen:2017,andersen:2018b}, 
through the \texttt{Malan} package (version 1.0.4)  in \textsf{R}  \citep{andersen:2018b}.
Populations were simulated over 250 generations (approximately $5'000$ years) with a constant growth rate of 2\% per generation, reaching a final size in the last generation of $M \in \{5'000, 50'000, 500'000\}$ individuals.
The total population across the last three generations—those considered alive for forensic purposes—thus amounts to approximately \(N=3\times M\) individuals. See Table \ref{tab:population} for the values of $N$ and for the initial population size at generation 0.
These differences account for alternative specifications of the defence hypothesis, 
for instance whether the relevant population corresponds to an entire country or to a more restricted area, thereby influencing the specification of \(N\).
Simulations were also performed under three different assumptions on the variance in reproductive success (VRS), that combined with the three values of $M$ gives rise to
a total of nine simulated populations (see Table \ref{tab:population}).  

In population genetics, the variance in reproductive success (VRS) is defined as the excess variance in the number of sons per father, \(S_i\), relative to the Poisson expectation:
$\mathrm{VRS} := \mathrm{Var}(S_i) - 1.
$

Following \citet{andersen:2017}, we model the relative reproductive success of each individual \(i\) as an independent Gamma random variable with mean one:
\[
Q_i \sim \operatorname{Gamma}\left(\alpha,\frac{1}{\alpha}\right),
\]
where the second parameter is the scale parameter. Thus,
\[
\mathbb{E}[Q_i]=1,
\qquad
\operatorname{Var}(Q_i)=\frac{1}{\alpha}.
\]
Conditional on \(Q_i=q_i\), the number of sons fathered by individual \(i\), denoted \(S_i\), is modeled as
\[
S_i \mid Q_i=q_i \sim \operatorname{Poisson}(q_i).
\]
The resulting Gamma--Poisson mixture implies that \(S_i\) follows a negative binomial distribution, with
\[
\mathbb{E}[S_i]=1,
\qquad
\operatorname{Var}(S_i)=1+\frac{1}{\alpha}.
\]
Therefore,
\[
\mathrm{VRS}
:=\operatorname{Var}(S_i)-1
=\frac{1}{\alpha}.
\]

We considered three values of \(\alpha\), corresponding to increasing levels of reproductive success variability:
\begin{itemize}
    \item \(\alpha=\infty\) (\(\mathrm{VRS}=0\)): equal paternity probabilities for all males (the classical Wright-Fisher model);
    \item \(\alpha=2\) (\(\mathrm{VRS}=0.5\)): moderate variability in male reproductive success, as might be expected in many contemporary populations;
    \item \(\alpha=0.1\) (\(\mathrm{VRS}=10\)): high variability in male reproductive success, with paternity concentrated among a small number of highly prolific males.
\end{itemize}

In Table \ref{tab:population} we label the 9 different populations used for testing the reviewed methods.

\begin{table}[ht]
\caption{Names assigned to populations according to VRS and final population size $N=3\times M$, along with initial population size $IS$ derived by inverting the growth formula \(N_t = \lceil IS \times (1 + 0.02)^t \rceil\), \(t = 0, 1, \ldots, 249\), and reference database size $n= 0.02\times M$ .}
\centering

\begin{tabular}{lllccc}
\toprule
& & \multicolumn{4}{c}{VRS} \\
\cmidrule(lr){4-6}
$N$ & $IS$ &$n$& 0 & 0.5 & 10 \\
\midrule
$1'500'000$ & 35& 100000 &1A & 1B & 1C \\
$150'000$    &353 &10000&2A & 2B & 2C \\
$15'000$      & 3540&1000& 3A & 3B & 3C \\
\bottomrule
\end{tabular}

\label{tab:population}
\end{table}
These settings are not intended to be realistic representations of actual populations but are used to perform a sensitivity analysis on population assumptions.

To initiate each simulated genealogy, \emph{founder haplotypes} were assigned to the first generation.  
These haplotypes represent the initial genetic profiles from which the Y-STR lineages evolve through mutation and inheritance across generations.  
Founder profiles were sampled from the Dutch population in the YHRD database, 
and locus-specific mutation rates were applied to model the transmission process (Table~\ref{tab:ystr_mutation_rates}). 
  
\begin{center}

\begin{table*}[ht!]
\caption{Mutation rates for the Y-STR markers  from \citep{andersen:2017, willuweit:2015} , rounded to five decimal places.}
\label{tab:ystr_mutation_rates}
\centering
\begin{tabular}{cccccccccc}
\hline
DYS19 &
DYS389I &
DYS389II.I &
DYS390 &
DYS391 &
DYS392 &
DYS393 &
DYS437 &
DYS438 &
DYS439 \\
\hline

0.00224 &
0.00293 &
0.00119 &
0.00211 &
0.00245 &
0.00052 &
0.00105 &
0.00122 &
0.00037 &
0.00545 \\
\hline
\end{tabular}

\end{table*}
    
\end{center}

\subsection{The archive of simulated supervised cases}
\label{sec:archive}

After having created the 9 background populations, we  proceed by generating a collection of supervised rare type match cases.  
Each case consists of a reference database of Y-STR profiles, denoted \( y_{1:n} \), randomly sampled without replacement from the simulated population.  
The database size, $n$, is arbitrarily set to the \(2\% \) of the last generation of the corresponding reference population (see Table \ref{tab:population} for the specific values).  
To each database, we add two additional profiles: one representing the crime sample and one representing the suspect, as follows.

\begin{itemize}
\item Under the prosecution hypothesis \( h_p \), the crime sample profile \( y_{n+1} \) is drawn at random from the portion of the population conditional on its haplotype not being observed among \( y_{1:n} \), thereby ensuring the \emph{rare type} condition.  
The suspect profile \( y_{n+2} \) is then set equal to \( y_{n+1} \), since both originate from the same individual.

\item Under the defence hypothesis \( h_d \), the procedure is identical except that the suspect profile \( y_{n+2} \) is drawn independently from the entire population.
\end{itemize}

\section{Results}
\label{sec:results}

The results are organised  adopting two complementary perspectives:
\begin{itemize}
    \item  in Section \ref{sec:all-cases} we  compare the methods by considering all cases simulated by using the empirical estimates of the metrics introduced in Section~\ref{sec:evaluation}. 
    \item  
    in Section \ref{sec:match-cases} we  realize a conditional  comparison restricted to the match cases (also under $h_d$), where we focus on how the methods behave in terms of the distribution of the likelihood ratios when a match arises.
\end{itemize}

  Some of the results in Section \ref{sec:all-cases} are displayed for the population 2B, that is intermediate in terms of size and VRS (see Table \ref{tab:population}). This is done for space reasons since the results obtained  for all nine populations are qualitatively similar (see Supplementary material). We simulate (approximately) an equal number of match cases under both hypotheses.  
This requires generating a substantially larger number of cases under \( h_d \), due to the predominance of non-match events arising with probability one minus the random match probability, as shown in Table \ref{tab:Eterogeneity}.

\subsection{Comparison of the methods: all cases considered}
\label{sec:all-cases}

ECE plots \citep{ramos:2013} are commonly used to  display how empirical cross-entropy varies as a function of the prior probability assigned to one of the hypotheses.

In Fig.~\ref{fig:combined} we present the ECE together with its decomposition terms, according to \eqref{eq:eqGoodrisk}, for all the methods.  
For clarity, prior probabilities on the \( x \)-axis are reported on the \( \log_{10} \)-odds scale.

\begin{center}
\begin{figure*}[ht!]
    \centering
    \includegraphics[width=1\linewidth]{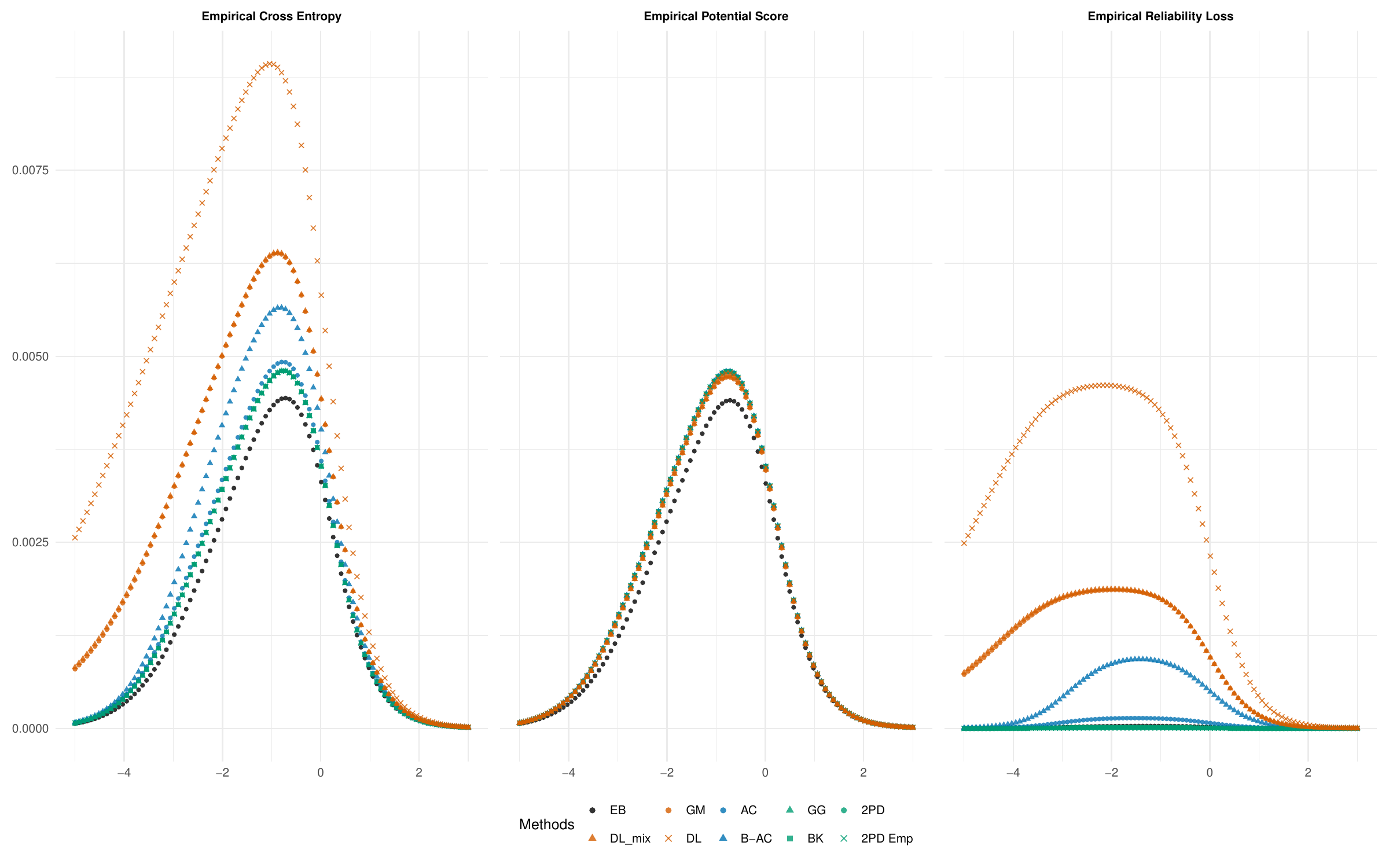}
    \caption{Empirical posterior cross-entropy, potential score, and reliability loss for population 2B. The  methods are colour-coded according to the three groups defined in Table~\ref{tab:met}.}
    \label{fig:combined}
\end{figure*}
\end{center}
The ECE curves exhibit a bell-shaped profile with a slight positive asymmetry (Fig.~\ref{fig:combined}). Their overall shape is largely determined by the potential score, which is remarkably similar across methods, whereas the reliability loss is comparatively flatter but varies substantially between methods. This indicates that the methods have a broadly comparable intrinsic ability to reduce uncertainty about the true hypothesis, while they differ mainly in the extent to which their posterior probabilities are calibrated.

\begin{figure}[htbp]
    \centering
   \includegraphics[width=0.6\linewidth]{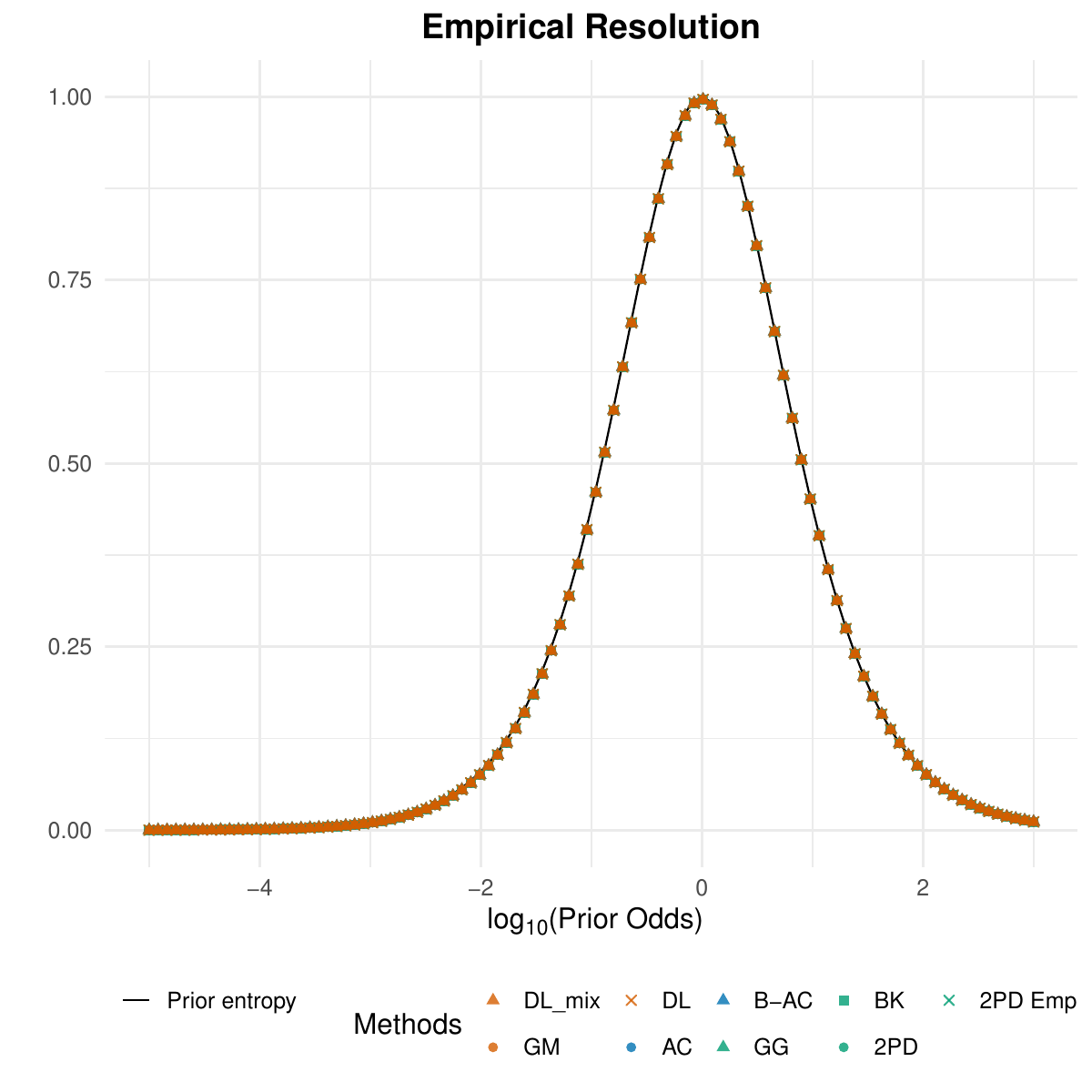}
    \caption{Resolution and prior entropy for population 2B.}
    \label{fig:resolution}
\end{figure}

This interpretation is consistent with Bröcker's decomposition \eqref{eq:brocker}, illustrated in Fig.~\ref{fig:resolution}, where the potential score is expressed as the difference between the prior uncertainty and the resolution. The two quantities are almost indistinguishable for all methods on the scale of the figure, indicating that most of the prior uncertainty is removed by the information contained in the evidence. Their small residual difference corresponds precisely to the potential score shown in Fig.~\ref{fig:combined}. Consequently, if the methods were perfectly calibrated, so that the reliability loss vanished, their performance would approach this common lower bound.

The same conclusion can be viewed more directly through the alternative decomposition proposed in \eqref{eq:ceci3}, where differences in potential score are exactly equal to differences in coarseness. Figure~\ref{fig:delta} therefore reports the pairwise $\Delta$-coarseness between methods. As expected, comparisons involving the EB exhibit substantially larger differences because EB assigns a distinct posterior probability to every haplotype, thereby achieving the minimum attainable coarseness. Excluding the EB, taken as a reference method, pairwise differences among the  methods under scrutiny are remarkably small. Although the methods follow the expected ordering—allele-count methods exhibiting the lowest coarseness, followed by spectrum-based methods and finally allele-frequency methods—the magnitude of these differences remains limited. This indicates that the different strategies for summarizing the evidence lose only a modest amount of discriminatory information.

The principal differences among methods therefore arise from the reliability loss: spectrum-based methods exhibit the smallest reliability loss and consequently the lowest ECE, followed by count-based methods, whereas allele-frequency methods (DL, DL-Mix and GM) show the largest reliability loss. This pattern closely follows the classification proposed in Table~\ref{tab:met}, suggesting that differences in performance are driven primarily by calibration rather than by information loss due to data reduction.

\begin{figure}[htbp]
    \centering
    \includegraphics[width=0.6\linewidth]{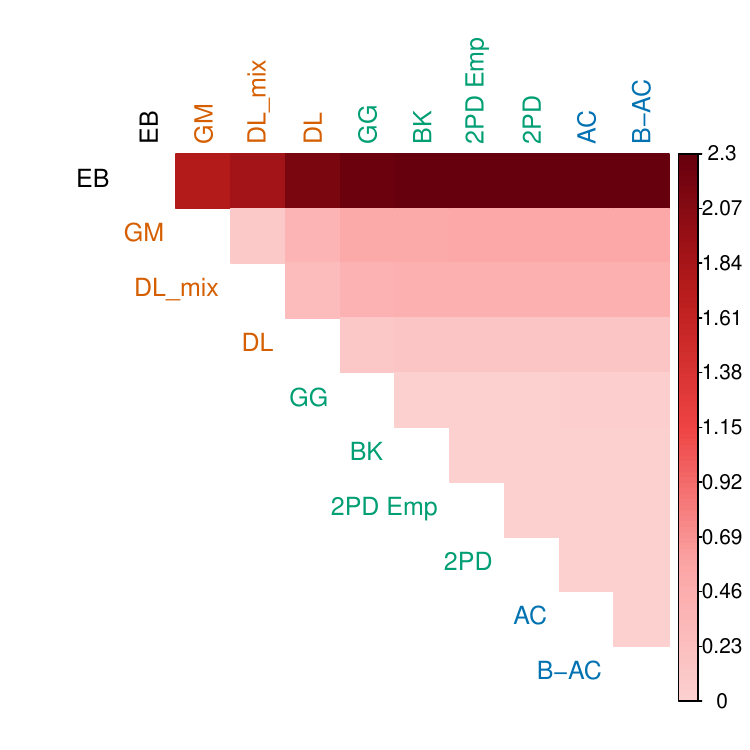}
    \caption{$\Delta\text{Coarseness}_{i-j}$ for each pair of methods $(i,j)$ in population 2B, evaluated under $p(h_p)=p(h_d)=0.5$. Values on the color scale should be multiplied by $10^{-4}$}
    \label{fig:delta}
\end{figure}

To facilitate comparisons across populations, we fix the prior probabilities at $p(h_p)=p(h_d)$=0.5 and examine how the performance measures vary under this common reference scenario.

The ranking of the methods in terms of ECE remains stable across a wide range of prior probabilities (Fig.~\ref{fig:combined}), providing empirical evidence that the comparative evaluation is robust to the choice of the prior on $h_p$.
Therefore, when comparing method performance across populations with different sample sizes and levels of VRS, it is sufficient to consider a single reference prior. Throughout the following analyses, we adopt  $p(h_p)=p(h_d)=0.5$, for which the corresponding ECE is commonly referred to as the \emph{Cllr} (cost of log-likelihood ratio).

Figure~\ref{fig:cllr} reports the \emph{Cllr}, together with its decomposition into potential score and reliability loss, across all populations listed in Table~\ref{tab:population}.  
Populations are ordered along the \( x \)-axis by decreasing size and increasing VRS.  
Moving from left to right, populations become progressively more homogeneous: smaller population sizes correspond to fewer genetically distinct founders, while higher VRS values reflect reproductive dynamics driven by a limited number of highly prolific individuals.  
Both mechanisms contribute to reducing the relative diversity of haplotypes in the population, see Table \ref{tab:Eterogeneity}.

\begin{center}
\begin{table*}[ht!]
\caption{Characteristics of the nine study populations: (I) number of distinct pedigrees, (II) number of distinct haplotypes, and (III) random match probability (RMP). This latter is calculated as in formula \eqref{eq:rmp} in Appendix \ref{sec:rmp}. }
\centering
\resizebox{\textwidth}{!}{%
\begin{tabular}{lrrrrrrrrr}
\toprule
 & 1A & 1B & 1C & 2A & 2B & 2C & 3A & 3B & 3C \\
\midrule
N. of pedigrees   & 3226  & 2129  & 294   & 324  & 207  & 32   & 33  & 25  & 3 \\
N. of haplotypes  & 27584 & 22886 & 11441 & 4040 & 3328 & 1566 & 511 & 449 & 170 \\
RMP  & $5.6\times10^{-5}$ & $6.2\times10^{-5}$ & $8.2\times10^{-5}$
    & $5.3\times10^{-4}$ & $6.1\times10^{-4}$ & $7.9\times10^{-4}$
    & $5.2\times10^{-3}$ & $5.8\times10^{-3}$ & $7.1\times10^{-3}$ \\
\bottomrule
\end{tabular}%
}
\label{tab:Eterogeneity}
\end{table*}
\end{center}

 \begin{center}
\begin{figure*}[ht!]
    \centering
    \includegraphics[width=1\linewidth]{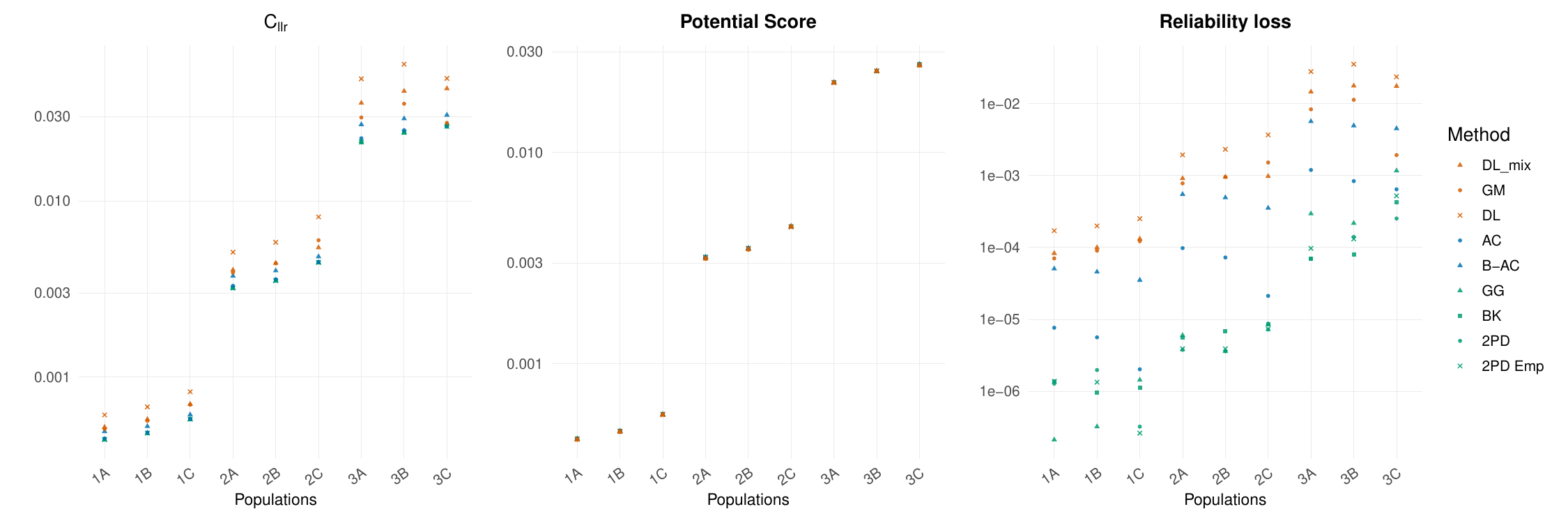}
    \caption{Empirical posterior cross-entropy (\emph{Cllr}), potential score, and reliability loss for all populations in Table~\ref{tab:population}. Methods are colour-coded according to the three groups defined in Table~\ref{tab:met}.}
    \label{fig:cllr}
\end{figure*}
\end{center}

Considering the \emph{Cllr} in Fig. \ref{fig:cllr} as an overall measure of performance, population heterogeneity appears to play a dominant role in determining methods' performance. As population homogeneity increases, the random match probability also increases (see Table~\ref{tab:Eterogeneity}, third row), making coincidental matches under $h_d$	  more frequent. Since these cases contribute most to the \emph{Cllr}, the overall performance deteriorates.
It is apparent that most of the variation in \emph{Cllr} is driven by population size, while VRS has a more limited effect within groups of similar size.  
Again the potential score varies only slightly across methods, indicating that differences in performance are primarily driven by the reliability loss, i.e. by the lack of calibration.

The ranking of the groups remains consistent with that observed for Population 2B in Figure \ref{fig:combined}: the spectrum-based methods achieve the best performance, followed by count-based methods, while allele-frequency-based methods perform the worst, since they achieve higher discrimination at the expense of substantially poorer calibration. 

This ordering may appear counterintuitive, particularly because allele-frequency-based methods attempt to model haplotype probabilities at a finer level of detail. A possible explanation lies in the way these methods construct the haplotype distribution. They first estimate allele frequencies at each locus and then combine them to assign probabilities over the full haplotype space obtained as the Cartesian product of the per-locus allele sets. However, this construction ignores the fact that many --indeed, the vast majority-- of these theoretical haplotypes are absent from the population. Consequently, probability mass is allocated to non-existent haplotypes, systematically reducing the probabilities assigned to the haplotypes that actually occur, including the observed haplotype  
$\tilde{y}$. When the LR is computed according to \eqref{eq:LR}, these underestimated probabilities translate into inflated LR values, particularly detrimental for coincidental matches under \( h_d \). Since such cases contribute most strongly to the \emph{Cllr} and its components, the overall performance of allele-frequency-based methods deteriorates. A possible remedy would be to incorporate population-specific prior knowledge about the haplotype space, assigning negligible probability to implausible haplotypes and thereby improving the calibration of the resulting LR values.\footnote{We note that in the implementation of DL-mix applied to large populations for computational reasons we were forced to limit the number of subpopulations.}
Finally, regarding B-AC we note that  B-AC specifies a $\mathrm{Beta}(1,1)$ prior for $\phi_{\tilde{y}}$ but a more informative prior may be preferable in the rare type match setting, as small values of $\phi_{\tilde{y}}$ are more plausible.

\subsection{Comparison of the methods: only matching cases considered}
\label{sec:match-cases}

Simulated cases can be classified as follows:
\begin{itemize}
   \item \textbf{Non-match cases} ($y_{n+1} \neq y_{n+2}$).  
   In this situation, all methods yield \( \mathrm{LR} = 0 \), correctly voiding the prosecution hypothesis.  
   As a consequence, comparison across methods is not meaningful in this setting, since the same conclusion would be reached by directly observing that \( y_{n+1} \neq y_{n+2} \), without requiring any modelling. These cases arise exclusively when the crime and suspect traces originate from different individuals (i.e.\ under \( h_d \)).  

   \item \textbf{Match cases}, for which \( y_{n+1} = y_{n+2} \).  These cases  arise both when the traces originate from the same individual (under \( h_p \)) or from coincidental matches (under \( h_d \)). In this cases, all methods produce LR$>0$, and so it is interesting to study  how well  they are able to distinguish between the two underlying hypotheses according to the value of the LR.  
   This ability is especially relevant for coincidental matches, which lead to substantially misleading posterior assessments. 
\end{itemize}

\begin{figure}
    \centering
    \includegraphics[width=0.95\linewidth]{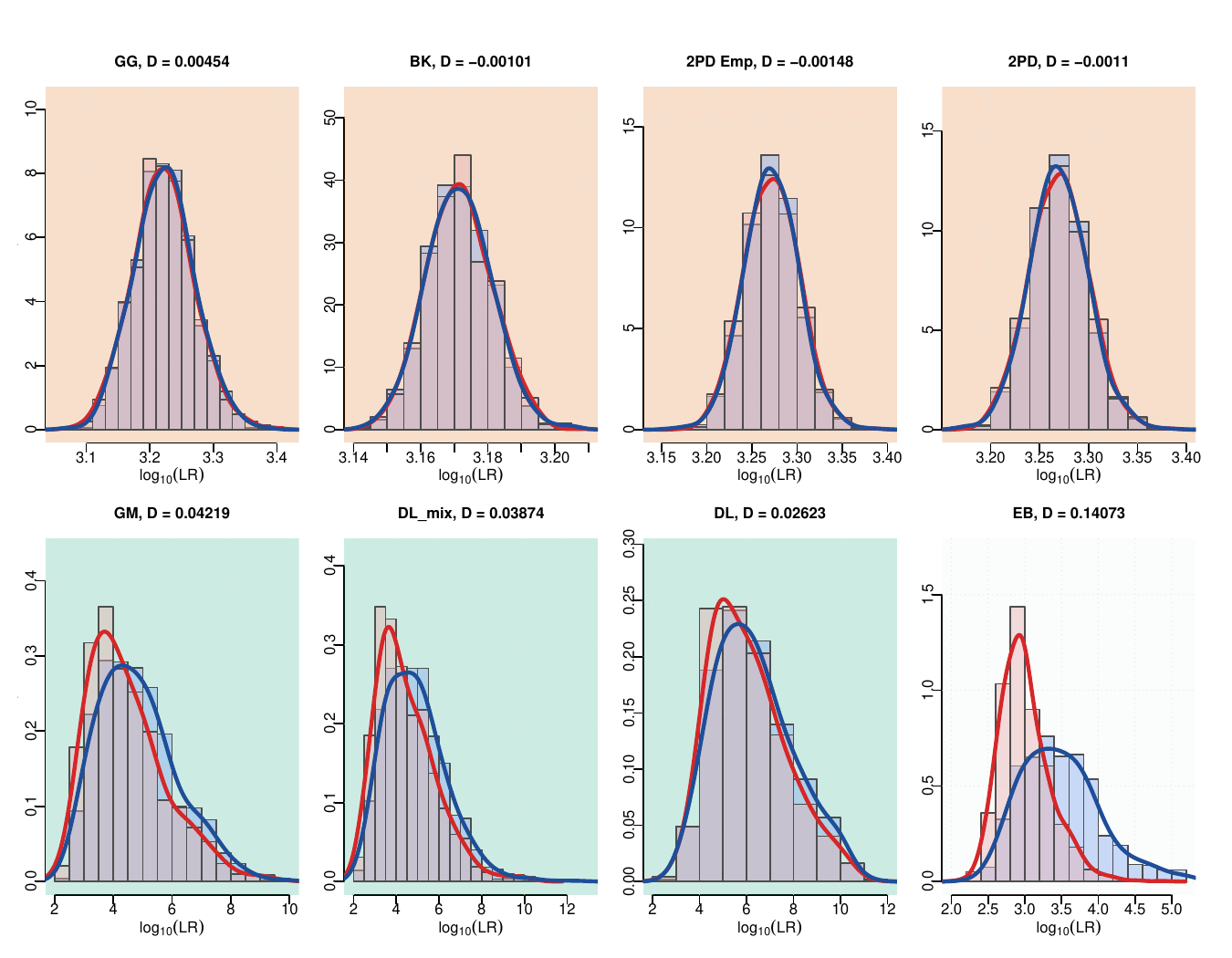}
    \caption{LR distribution for simulated $h_p$ (red) and $h_d$ (blue) cases, with specification of the value of the directional CDF $D$, net to the name of each method, for population 2B. }
    \label{fig:placeholder}
\end{figure}

In Fig.~\ref{fig:placeholder}, each method is represented by two histograms showing the distribution of LR values for match cases generated under \( h_p \) and \( h_d \), respectively.  
For comparison, we also include the EB, while  AC and B-AC are not represented since they  produce exactly the same  LR value for all match cases and under both hypotheses, thus exhibiting no discriminatory ability. Even though a perfect LR system would assign $0<\textrm{LR}<1$ to $h_d$ cases and $\textrm{LR}>1$ to $h_p$ cases, we would be satisfied if the two sets of LR values did not overlap. 
To quantify the degree of separation between the two distributions (under $h_p$ and $h_d$), we introduce the \emph{directional CDF distance}, defined as
\begin{equation}
\label{eq:dcdf}
    \hat{D} = \frac{1}{x_{\max} - x_{\min}}
           \sum_{k=1}^K \bigl[\hat{F}_{h_d}(x_k) - \hat{F}_{h_p}(x_k)\bigr]
           \,\Delta x,
\end{equation}
where \( \hat{F}_{h}(\cdot) \), \( h \in \{h_p, h_d\} \), denotes the empirical CDF.

This measure takes values in \( [-1,1] \):  
a value \( \hat{D} = 0 \) indicates complete overlap between the two distributions, while \( \hat{D} = 1 \) corresponds to perfect separation, where LR values under \( h_p \) are always larger than those under \( h_d \).  
Negative values indicate tendency to misclassification. 

 Across all methods and for both hypotheses, the minimum LR value (on the natural scale) exceeds 100.  
    This indicates that, in  case of coincidental matches, all methods yield strongly misleading results in favor of the prosecution hypothesis.

  While spectrum-based methods show limited variability in LR values and substantial overlap between the distributions under \( h_p \) and \( h_d \), suggesting weak discriminatory power, 
 allele-frequency-based methods and the EB benchmark exhibit a more pronounced separation between the two distributions.   This  can be explained by the data-generating mechanism.  
    The haplotype \( y_{n+1} \) is sampled among those not present in the reference database and is therefore expected to have low population probability.  
    Under \( h_p \), LR values reflect the inverse of its probability.  
    Under \( h_d \), a match occurs only if a second independent draw from the population equals \( y_{n+1} \), which is more likely for haplotypes with higher population probabilities.  
    This selection mechanism results in smaller LR values for coincidental matches than for matches under \( h_p \), thereby inducing separation between the two distributions. The more the data are reduced the weaker   the separation is.

In conclusion, from the perspective of hypotheses discrimination in case of a match, allele-frequency-based methods appear to be the most promising.  
This suggests that, following appropriate methodological refinements (as discussed in Section~\ref{sec:all-cases}), further development of this class of methods may be worthwhile.

The previous results describe the overall behaviour of the ECE. To understand which cases drive these differences, we now examine separately the contributions to ECE of simulations under $h_p$
 and under $h_d$.

Cases under $h_p$ always 
correspond to genuine matches and therefore yield LRs largely greater than 1, with 
limited associated penalties. By contrast, under $h_d$, coincidental 
matches may occur — in which an individual unrelated to the suspect 
happens to share the same haplotype — resulting in positive LRs and 
potentially large penalties.
 As a consequence 
the $h_d$ component is the dominant one.

Figure~\ref{fig:perc_hd} shows the percentage of the ECE attributable to the (very few) coincidental matches under $h_d$. As it appears the allele-frequency methods suffer enormously of the  $h_d$-cases contribute and this effect is due to their lack of reliability which massively affects  ECE.  Spectrum based methods and allele-count methods follow.

\begin{figure}[htbp]
    \centering
    \includegraphics[width=0.7\linewidth]{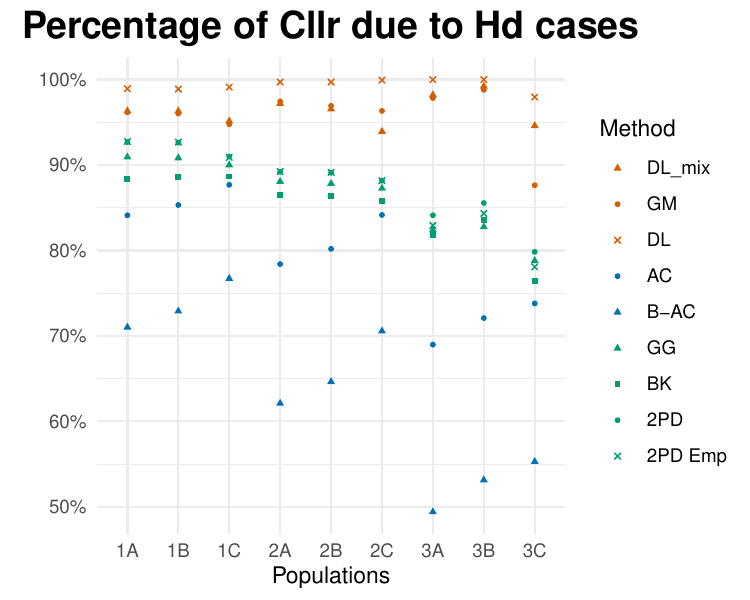}
    \caption{Percentage of the empirical posterior cross-entropy (ECE) 
    attributable to cases simulated under $h_d$, across the nine simulated 
    populations.}
    \label{fig:perc_hd}
\end{figure}

\section{Discussion and conclusion}
\label{sec:conclusions}
In this work, we analysed and compared several methods currently adopted in forensic practice across different countries \citep{andersen:2021} for evaluating the likelihood ratio (LR) in rare type cases \citep{cereda:2017}.  

While previous studies \citep{andersen:2021,brenner:2014} have reviewed methods for assessing the evidential value of Y-STR profile matches, to the best of our knowledge this is the first work specifically aimed at defining a general framework for comparing methods designed for the rare type match problem.

A first contribution of the paper is a structured description of the methods, emphasizing the different types of summary statistics upon which they rely to compute the LR.  
This leads naturally to a classification into three groups: 
the count-based methods,  
the spectrum-based methods, and the allele-frequency-based methods, detailed in Table \ref{tab:met}.

The second contribution is the definition of a framework for evaluating the considered methods based on the posterior cross-entropy, interpreted as the expected cost of soft decisions obtained by combining a prior distribution on the hypotheses with the LR produced by the method,  under the logarithmic scoring rule.
To analyse the posterior cross-entropy, we propose a reformulation of the problem by shifting the focus from the original haplotype space to the space induced by the statistics employed by each method.
This change of perspective highlights how these statistics affect the expected cost, offering a unified framework to evaluate the methods through both two existing decompositions and a newly proposed one.  
The first decomposition allows us to reinterpret the posterior cross-entropy in terms of  reliability loss and potential score. The reliability loss measures the discrepancy between the method's posterior probabilities and their calibrated counterparts, while the potential score quantifies the maximum  separation between the hypothesis achievable by these calibrated probabilities. This decomposition highlights that the reliability loss may, in principle, be reduced by recalibration, whereas the potential score represents an intrinsic characteristic of the method that cannot be improved.
It should be noted that this decomposition, arising from the proposed change in perspective, resembles the one proposed by \citet{degroot:1982}. However, in their work, the posterior probabilities were assumed to belong to an ad hoc finite set, whereas in our framework the corresponding set is naturally induced by the method through the space of its summary statistics. Consequently, no arbitrary discretization is required. 
We also re-frame a second decomposition originally proposed by \citet{brocker:2009}. It shows how well-calibrated posterior probabilities reduce the uncertainty of the prior distribution through the information conveyed by the summary statistics (see Fig.~\ref{fig:resolution}). 
We further propose a novel decomposition introducing the notion of coarseness. This quantity explicitly measures the information loss induced by the granularity of the summary statistics employed by the methods: the finer the granularity, the smaller the coarseness.

A third contribution of the paper concerns the empirical estimation of posterior cross-entropy, and of the terms of its decompositions, within a simulation framework.  
The main challenge lies in obtaining reliable estimates of the calibrated probabilities associated with each method.  
The quality of these estimates depends both on the size of the reference database and on the cardinality of the space of the statistics,  \( \mathcal{X}_m \).  
For some methods, the latter becomes so large that empirical estimation is unstable and the law of large numbers operates poorly.  
To address this problem, and following \citet{brummer:2010}, we reduced the effective cardinality of \( \mathcal{X}_m \) by means of the Pool Adjacent Violators (PAV) algorithm.  
Among monotonic calibration procedures, PAV provides a conservative and statistically efficient estimate of the Potential score.

Finally, as a last contribution, the methods were compared through extensive simulations.  
Following the framework proposed by \citet{andersen:2021}, we simulated populations with three different sizes under three different levels of variance in reproductive success (VRS),  resulting in a total of nine populations.  
 A first point that emerges is that, in the presence of a match, all methods produce LR values substantially larger than one, irrespective of the true hypothesis.  
Consequently, even coincidental matches under \( h_d \) generated strong support in favour of the prosecution hypothesis.  
Conversely, in non-match cases, all methods produced an LR of 0, correctly supporting  \( h_d \).  This behavior implies that both resolution and coarseness, the only components of the potential score that depend on how the methods' posterior probabilities partition the sample space, are relatively similar across methods.
The main differences, instead, arise from the reliability loss, particularly in the presence of coincidental matches. In these cases, allele-frequency-based methods exhibit substantially greater miscalibration, resulting in larger ECE. This may be explained by the fact that these methods spread probability mass over many hypothetical haplotypes, many of which may not actually exist in the population, thereby systematically underestimating the probabilities of existing haplotypes.
This in turn produces overestimated LR values in  case of matches.
At the same time, allele-frequency-based methods were the only methods exhibiting a non-negligible ability to discriminate between genuine and coincidental matches.  
This property originates from their attempt to estimate haplotype probabilities directly.  
Under coincidental matches, a double selection mechanism operates, making haplotypes with relatively larger population probabilities more likely to produce matches under \( h_d \).  
This induces partial separation between the LR distributions obtained under the two hypotheses.

The simulation framework also allowed us to evaluate the maximum discrimination theoretically achievable when the true haplotype distribution is known.  
The results suggest that substantial room for improvement remains.  
Further research devoted to the development of methods capable of accurately estimating haplotype distributions therefore appears warranted.

Finally, by ordering populations according to their degree of haplotype heterogeneity — from large populations with low VRS (high heterogeneity) to small populations with high VRS (high homogeneity) — we observed a systematic deterioration in methods' performance as measured by ECE.  
More homogeneous populations exhibit a higher probability of coincidental matches, increasing the frequency of misleading evidence in favor of \( h_p \).  
This finding further supports the need for methods capable of accurately modelling haplotype probabilities within the relevant reference population.

These conclusions are consistent with the considerations of \citet{andersen:2021}  who argue that Y-STR haplotypes should not be evaluated with respect to excessively broad populations unrelated to the likely genealogy of the trace donor.  
Rather, the relevant population should be narrowed also according to the investigative context.  
If supported by methods capable of accurately estimating haplotype probabilities, this approach would likely produce substantially smaller and better calibrated LR values than those obtained by considering very large and heterogeneous populations.

\subsection*{Financial Disclosure}

None reported.

\subsection*{Conflicts of Interest}

The authors declare no conflicts of interest.
\subsection*{Acknowledgment}
This work has had a long gestation, and we are grateful to the many people who supported its development over the years. In particular, we thank Mikkel Meyer Andersen for initiating the project during a discussion at the 2016 research program on Probability and Statistics in Forensic Science, hosted by the Isaac Newton Institute for Mathematical Sciences at the University of Cambridge, and for developing the first prototype code. We also thank Richard Gill and Marjan Sjerps for recognizing the potential of the idea, for their encouragement and support, and for motivating the first author to prepare an initial version of the manuscript. We are grateful to Tóra Oluffa Stenberg Olsen for carefully reviewing the section on the discrete Laplace distribution and for providing valuable suggestions that helped us improve the final version of the manuscript. We also thank Cosimo Grazzini for carefully reviewing the appendix. AI tools (ChatGPT and Gemini) assisted with manuscript editing and translation.

\bibliographystyle{apalike}
\bibliography{bibliography}

\subsection*{Supporting Information}

Additional supporting information can be found online in the Supporting Information
section.

\appendix

\section{Technical Details}
\vspace*{12pt}
\begin{proposition}
\label{app:A1}
Let $\mathcal{X}_m$ denote the finite set of values assumed by the posterior distribution $p_m(h\mid y)$ when $y$ varies in $ \mathcal{Y}$, and for each $x\in \mathcal{X}_m$, let $\mathcal{Y}^x_m: =\{y\in \mathcal{Y} \mid p_m(h_p\mid y)=x\}.$ 
It holds 
\begin{enumerate}
    \item $p(h\mid \mathcal{Y}_m^x)=\frac{\sum_{y\in \mathcal{Y}_m^x}p(h\mid y)p(y)}{\sum_{y\in \mathcal{Y}_m^x}p(y)}$ if $p(y)>0$ for each $y\in \mathcal{Y}_m^x$;
    \item $p_m(h_p \mid \mathcal{Y}^x_m)=x$ if $p_m(y)>0$ for each $y\in \mathcal{Y}_m^x$.

\end{enumerate}
\end{proposition}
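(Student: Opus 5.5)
Both parts come down to the definition of conditional probability given an event, applied to the event $\{Y\in\mathcal{Y}_m^x\}$. This event is a disjoint union of the singletons $\{Y=y\}$, $y\in\mathcal{Y}_m^x$. Because $\mathcal{Y}$ is discrete, $\mathcal{Y}_m^x$ is at most countable, so countable additivity applies.

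For part 1, I write
\[
p(h\mid \mathcal{Y}_m^x)=\frac{\Pr(H=h,\,Y\in\mathcal{Y}_m^x)}{\Pr(Y\in\mathcal{Y}_m^x)}
=\frac{\sum_{y\in\mathcal{Y}_m^x}p(h,y)}{\sum_{y\in\mathcal{Y}_m^x}p(y)}.
\]
The positivity assumption $p(y)>0$ on $\mathcal{Y}_m^x$ does two jobs. It makes the denominator strictly positive, so the conditional probability is well defined. It also makes each $p(h\mid y)=p(h,y)/p(y)$ well defined, so I can factor $p(h,y)=p(h\mid y)p(y)$ term by term in the numerator. That gives the stated weighted-average formula immediately. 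The same argument covers both $h=h_p$ and $h=h_d$.

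For part 2, I repeat the computation under the method's own joint law $p_m(h,y)$:
\[
p_m(h_p\mid\mathcal{Y}_m^x)=\frac{\sum_{y\in\mathcal{Y}_m^x}p_m(h_p\mid y)\,p_m(y)}{\sum_{y\in\mathcal{Y}_m^x}p_m(y)}.
\]
Here the assumption $p_m(y)>0$ ensures that the denominator is positive and that the factorization is legitimate. By the definition of $\mathcal{Y}_m^x$, the value $p_m(h_p\mid y)$ equals the constant $x$ for every $y$ in the sum. Pulling $x$ out of the numerator cancels the remaining sum against the denominator, leaving $x$. Thus conditioning the method's posterior on a level set of itself returns the level value. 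This is the property that makes $p(h_p\mid x)=x$ the right definition of calibration in the main text.

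There is no real obstacle. The only points needing care are bookkeeping. First, if $\mathcal{Y}$ is infinite, the sums must be read as convergent countable sums; they are, because they are bounded by total probability. Second, I must state explicitly where positivity is used: it guarantees the denominators are nonzero and the factorizations $p(h,y)=p(h\mid y)p(y)$ are valid. Strictly, positivity of the sums alone would suffice for the denominators, but the term-wise hypothesis is what makes each $p(h\mid y)$ meaningful.
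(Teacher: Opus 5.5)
Your proposal is correct and follows the paper's proof essentially step for step: additivity over the singletons in $\mathcal{Y}_m^x$, the factorization $p(h,y)=p(h\mid y)p(y)$ in the numerator, and, for part 2, pulling the constant $x$ out of the sum so that it cancels against the denominator. The only cosmetic difference is that the paper passes through $p(h)\sum_{y\in\mathcal{Y}_m^x}p(y\mid h)$ before reaching $\sum_{y\in\mathcal{Y}_m^x}p(h,y)$, whereas you go to the joint directly.
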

\begin{proof}
Using the third axiom of probability it holds
$p(\mathcal{Y}^x_m)= \sum \limits_{y\in \mathcal{Y}^x_m } p(y) $
and $p(\mathcal{Y}^x_m\mid h)= \sum \limits_{y\in \mathcal{Y}^x_m } p(y\mid h) \text{ for } h\in\{h_p,h_d\}.$
Thus, for each $h\in\{h_p, h_d\}$ we have 
$${p(h, \mathcal{Y}_m^x)}=p(h) \sum \limits_{y\in \mathcal{Y}^x_m } p(y\mid  h)=\sum \limits_{y\in \mathcal{Y}^x_m} p(h,y)= \sum \limits_{y\in \mathcal{Y}^x_m} p(h\mid y) p(y).$$
\begin{enumerate}
    \item $p(h\mid \mathcal{Y}_m^x)=\frac{p(h, \mathcal{Y}_m^x)}{p(\mathcal{Y}_m^x)}=\frac{\sum_{y\in \mathcal{Y}_m^x}p(h\mid y)p(y)}{\sum_{y\in \mathcal{Y}_m^x}p(y)}$;
    \item $p_m(h_p\mid \mathcal{Y}_m^x)=
    \frac{\sum_{y\in \mathcal{Y}_m^x}p_m(h_p\mid y)p_m(y)}{\sum_{y\in \mathcal{Y}_m^x}p_m(y)}= \frac{\sum_{y\in \mathcal{Y}_m^x} x \cdot p_m(y)}{\sum_{y\in \mathcal{Y}_m^x}p_m(y)}=x \cdot\frac{\sum_{y\in \mathcal{Y}_m^x}  p_m(y)}{\sum_{y\in \mathcal{Y}_m^x}p_m(y)}=x,$  where we exploit, by definition, $p_m(h_p\,|\,y)=x$ for each $y \in \mathcal{Y}_m^x$.
    
\end{enumerate}
\end{proof}

\begin{proposition}
\label{app:A2}
   For every method $m$ with probability distribution $p_m$, the posterior cross entropy can be written as
    
\begin{align*} &\mathcal{P}CE(p(h\mid y); p_m(h\mid y))= 
\mathbb{E}_{X_m}(D(p(h\mid x); p_m(h\mid x)))+\mathbb{E}_{X_m} ( H(p(h\mid x))).
\end{align*}
 if $p(y)>0 \, \forall y \in \mathcal{Y}$ and $p(h)>0 \, \forall \, h \in \{h_p,h_d\}$.
\end{proposition}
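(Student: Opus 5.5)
The plan is to start from the definition \eqref{xxd} of $\mathcal{P}CE$ and regroup the sum over $y\in\mathcal{Y}$ according to the partition $\{\mathcal{Y}_m^x\}_{x\in\mathcal{X}_m}$ induced by method $m$. Write
\[
\mathcal{P}CE=-\sum_{x\in\mathcal{X}_m}\sum_{h}\Bigl(\sum_{y\in\mathcal{Y}_m^x}p(h,y)\Bigr)\log p_m(h\mid y).
\]
The key observation is that, for $y\in\mathcal{Y}_m^x$, we have $p_m(h_p\mid y)=x$ and $p_m(h_d\mid y)=1-x$. So the logarithm does not depend on $y$ inside each block and can be pulled out of the inner sum.

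Next, I will identify the inner sum. By the computation in the proof of Proposition~\ref{app:A1}, $\sum_{y\in\mathcal{Y}_m^x}p(h,y)=p(h,\mathcal{Y}_m^x)=p(\mathcal{Y}_m^x)\,p(h\mid\mathcal{Y}_m^x)$. Here I use part 1 of that proposition, which requires $p(y)>0$; this is exactly the hypothesis of the statement. Writing $p(x):=p(\mathcal{Y}_m^x)$ and $p(h\mid x):=p(h\mid\mathcal{Y}_m^x)$, this gives
\[
\mathcal{P}CE=-\sum_{x}p(x)\sum_h p(h\mid x)\log p_m(h\mid x).
\]
To replace the constant value $x$ (respectively $1-x$) by $p_m(h\mid x)$, I invoke part 2 of Proposition~\ref{app:A1}. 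It gives $p_m(h_p\mid\mathcal{Y}_m^x)=x$, and hence by complement $p_m(h_d\mid\mathcal{Y}_m^x)=1-x$.

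Finally, I add and subtract $\log p(h\mid x)$ inside the sum, mirroring the derivation of \eqref{eq:eqGoodrisk} but at the coarser level $x$. This splits the expression into
\[
\sum_x p(x)\sum_h p(h\mid x)\log\frac{p(h\mid x)}{p_m(h\mid x)}+\sum_x p(x)\sum_h p(h\mid x)\log\frac{1}{p(h\mid x)},
\]
which is $\mathbb{E}_{X_m}[D(p(h\mid x);p_m(h\mid x))]+\mathbb{E}_{X_m}[H(p(h\mid x))]$, where $X_m$ is distributed according to $p(\mathcal{Y}_m^x)$.

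I expect the main obstacle to be technical bookkeeping rather than the algebra. First, I must justify that $p(h\mid x)>0$, or adopt the convention $0\log 0=0$, so that adding and subtracting $\log p(h\mid x)$ is legitimate. The assumption $p(h)>0$ together with $p(y)>0$ handles the conditioning, and zero-probability terms vanish on both sides. Second, part 2 of Proposition~\ref{app:A1} needs $p_m(y)>0$. I would bypass that requirement by noting that the pointwise identity $p_m(h_p\mid y)=x$ on $\mathcal{Y}_m^x$ is all that is actually used, so $p_m(h\mid x)$ can simply be defined as $x$ (respectively $1-x$). Infinite costs, which arise when $p_m(h\mid x)=0$ while $p(h\mid x)>0$, appear identically on both sides, so the equality holds in the extended reals.
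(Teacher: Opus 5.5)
Your proposal is correct and follows essentially the same route as the paper: regroup the sum over the blocks $\mathcal{Y}_m^x$, use that $p_m(h\mid y)$ is constant on each block (via Proposition~\ref{app:A1}) to reach $-\sum_{x} p(\mathcal{Y}_m^x)\sum_{h} p(h\mid\mathcal{Y}_m^x)\log p_m(h\mid\mathcal{Y}_m^x)$, then add and subtract $\log p(h\mid\mathcal{Y}_m^x)$. Two of your remarks are sound refinements the paper does not make: only the pointwise identity $p_m(h_p\mid y)=x$ is needed, so the condition $p_m(y)>0$ of part~2 of Proposition~\ref{app:A1}, which is not among the stated hypotheses, can be dropped; and your $0\log 0$/extended-reals bookkeeping covers blocks such as the non-match block with $x=0$.
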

\begin{proof}
\begin{align}
    \mathcal{P}CE(p(h\mid y); p_m(h\mid y))
&=- \sum_{h\in\{h_p,h_d\}} \sum_{y \in \mathcal{Y}} p(y)p(h \mid y)  \log p_m(h\mid y) \nonumber \\
&=- \sum_{h\in\{h_p,h_d\}} \sum_{x\in \mathcal{X}_m} \sum_{y \in \mathcal{Y}_m^x} p(y\mid h)p(h)  \log p_m(h\mid y) \nonumber \\
&=- \sum_{h\in\{h_p,h_d\}} \sum_{x\in \mathcal{X}_m} p(\mathcal{Y}_m^x\mid h)p(h)   \log p_m(h\mid \mathcal{Y}_m^x) \label{eq:pcalm} \\
&=- \sum_{h\in\{h_p,h_d\}} \sum_{x\in \mathcal{X}_m} p(h \mid \mathcal{Y}_m^x)  p(\mathcal{Y}_m^x)  \log p_m(h\mid \mathcal{Y}_m^x)\nonumber \\
&=- \sum_{x\in \mathcal{X}_m}  p(\mathcal{Y}_m^x)   \sum_{h\in\{h_p,h_d\}}p(h \mid \mathcal{Y}_m^x) \log p_m(h\mid \mathcal{Y}_m^x) \label{eq:rephrPCE},
\end{align}
where \eqref{eq:pcalm} comes from the fact that $p_m(h_p\mid y)=p_m(h_p \mid \mathcal{Y}^x_m )=x$ for each $y\in \mathcal{Y}^x_m$
as proven in Proposition \ref{app:A1}.
 By adding and subtracting $\log(p(h\mid \mathcal{Y}_m^x))$ from \eqref{eq:rephrPCE}, we obtain
\begin{align}
 \mathcal{P}CE(p(h\mid y); p_m(h\mid y)) 
=&- \sum_{x\in \mathcal{X}_m}  p(\mathcal{Y}_m^x)   \sum_{h\in\{h_p,h_d\}}p(h \mid \mathcal{Y}_m^x) \Bigl(\log p_m(h\mid \mathcal{Y}_m^x) \pm \log p(h\mid \mathcal{Y}_m^x) \Bigl) \nonumber  \\
=& \sum_{x\in \mathcal{X}_m}  p(\mathcal{Y}_m^x)   \sum_{h\in\{h_p,h_d\}}p(h \mid \mathcal{Y}_m^x) \log\Bigl(\frac{p(h\mid \mathcal{Y}_m^x)}{p_m(h\mid \mathcal{Y}_m^x)} \Bigl) - \sum_{x\in \mathcal{X}_m}  p(\mathcal{Y}_m^x)   \sum_{h\in\{h_p,h_d\}} p(h \mid \mathcal{Y}_m^x)\log p(h\mid \mathcal{Y}_m^x) \nonumber\\
=& \mathbb{E}_{X_m}(D(p(h\mid  \mathcal{Y}_m^x); p_m(h\mid  \mathcal{Y}_m^x)))+  \mathbb{E}_{X_m} ( H(p(h\mid  \mathcal{Y}_m^x))). \nonumber
\end{align}
\end{proof}

\begin{proposition}
\label{app:A3}
    For every $p_m$ the posterior cross entropy can be decomposed as follows:
\begin{align}
&\mathcal{P}CE(p(h\mid y); p_m(h\mid y))= \nonumber\\
&=\mathbb{E}_{X_m}(D(p(h\mid x); p_m(h\mid  x)))+ H(p(h))- \mathbb{E}_{X_m}(D(p (h\mid x);p(h))).\nonumber \label{eq:Abrokerdec} 
\end{align}
 if $p(y)>0 \, \forall y \in \mathcal{Y}$ and $p(h)>0 \, \forall \, h \in \{h_p,h_d\}$.

\end{proposition}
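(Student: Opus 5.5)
The plan is to start from Proposition~\ref{app:A2} and rewrite only the potential score term $\mathbb{E}_{X_m}[H(p(h\mid x))]$; the reliability loss stays unchanged. It therefore suffices to show
\[
\mathbb{E}_{X_m}\bigl[H(p(h\mid x))\bigr] = H(p(h)) - \mathbb{E}_{X_m}\bigl[D(p(h\mid x);p(h))\bigr].
\]
Throughout, I write $p(h\mid x)$ for $p(h\mid \mathcal{Y}_m^x)$ as given by Proposition~\ref{app:A1}, and $p(x)$ for $p(\mathcal{Y}_m^x)=\sum_{y\in\mathcal{Y}_m^x}p(y)$. This is positive because $p(y)>0$ for all $y$. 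The assumption $p(h)>0$ guarantees that the KL divergence to the prior is finite.

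Inside each cell I add and subtract $\log p(h)$ in the entropy:
\[
-\sum_{h} p(h\mid x)\log p(h\mid x) = -\sum_h p(h\mid x)\log p(h) - \sum_h p(h\mid x)\log\frac{p(h\mid x)}{p(h)}.
\]
Averaging over $x$ with weights $p(x)$, the second term becomes $-\mathbb{E}_{X_m}[D(p(h\mid x);p(h))]$, which is the resolution with a minus sign.

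For the first term I exchange the order of summation. This gives $-\sum_h \log p(h)\sum_{x}p(x)p(h\mid x)$. The key identity is the law of total probability over the partition $\{\mathcal{Y}_m^x\}_{x\in\mathcal{X}_m}$:
\[
\sum_x p(x)\,p(h\mid x) = \sum_x p(h,\mathcal{Y}_m^x) = \sum_{y\in\mathcal{Y}} p(h,y) = p(h).
\]
Here the middle equality uses the expression $p(h,\mathcal{Y}_m^x)=\sum_{y\in\mathcal{Y}_m^x}p(h,y)$ from the proof of Proposition~\ref{app:A1}, together with the fact that the sets $\mathcal{Y}_m^x$ are disjoint and cover $\mathcal{Y}$. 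Substituting this in, the first term equals $-\sum_h p(h)\log p(h)=H(p(h))$.

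Combining the two pieces with Proposition~\ref{app:A2} gives the stated decomposition. The only real obstacle is this marginalization step: it needs the family $\{\mathcal{Y}_m^x\}$ to be a genuine partition of $\mathcal{Y}$. That holds because each $y$ yields exactly one value $p_m(h_p\mid y)$, and $\mathcal{X}_m$ is finite, so all sums are finite and can be exchanged freely.
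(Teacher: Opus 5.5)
Your proposal is correct and follows essentially the same route as the paper: you reduce to the potential-score identity via Proposition~\ref{app:A2}, split off $\log p(h)$ inside each cell, and collapse $\sum_{x} p(h,\mathcal{Y}_m^x)=p(h)$ using the law of total probability over the partition $\{\mathcal{Y}_m^x\}_{x\in\mathcal{X}_m}$. The only difference is the direction of the computation: the paper expands $H(p(h))-\mathbb{E}_{X_m}[D(p(h\mid x);p(h))]$ and cancels it down to the expected entropy, whereas you start from the expected entropy and build up to that expression.
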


\begin{proof}
Following from Proposition \ref{app:A2}, it is sufficient  to prove that  $$H(p(h))-\mathbb{E}_{X_m}(D(p(h\mid \mathcal{Y}_m^x);p(h)))= \mathbb{E}_{X_m} ( H(p(h\mid  \mathcal{Y}_m^x)))$$

\begin{align}
    &H(p(h))-\mathbb{E}_{X_m}(D(p(h\mid x);p(h))) = \nonumber\\
    &= -\sum_{h\in\{h_p,h_d\}} p(h) \log p(h)- \sum_{x\in \mathcal{X}_m} p(\mathcal{Y}^x_m)\sum_{h\in\{h_p,h_d\}}p(h \mid \mathcal{Y}^x_m)\log \left(\frac{p(h\mid \mathcal{Y}^x_m)}{ p(h)} \right) \nonumber \\
    &= -\sum_{h\in\{h_p,h_d\}} p(h) \log p(h) - \sum_{x\in \mathcal{X}_m} p(\mathcal{Y}^x_m)\sum_{h\in\{h_p,h_d\}}p(h \mid \mathcal{Y}^x_m) \log p(h\mid \mathcal{Y}^x_m) +\sum_{x\in \mathcal{X}_m} \sum_{h\in\{h_p,h_d\}} p(h \mid \mathcal{Y}^x_m) p(\mathcal{Y}^x_m) \log p(h)  \nonumber\\
    &= -\sum_{h\in\{h_p,h_d\}} p(h) \log p(h)- \sum_{x\in \mathcal{X}_m} p(\mathcal{Y}^x_m)\sum_{h\in\{h_p,h_d\}}p(h \mid \mathcal{Y}^x_m)\log p(h\mid \mathcal{Y}^x_m) + \sum_{h\in\{h_p,h_d\}} \log p(h) \sum_{x\in \mathcal{X}_m}  p(h ,\mathcal{Y}^x_m)   \nonumber\\
    &= -\sum_{h\in\{h_p,h_d\}} p(h) \log p(h)- \sum_{x\in \mathcal{X}_m} p(\mathcal{Y}^x_m)\sum_{h\in\{h_p,h_d\}}p(h \mid \mathcal{Y}^x_m)\log p(h\mid \mathcal{Y}^x_m) + \sum_{h\in\{h_p,h_d\}} \log p(h) p(h) \nonumber\\
    &= - \sum_{x\in \mathcal{X}_m} p(\mathcal{Y}^x_m)\sum_{h\in\{h_p,h_d\}}p(h \mid \mathcal{Y}^x_m)\log p(h\mid \mathcal{Y}^x_m) \nonumber \\ 
    &= \mathbb{E}_{X_m} ( H(p(h\mid  \mathcal{Y}_m^x))). \nonumber
\end{align} 
 
\end{proof}

\begin{proposition}
\label{App:A4}
The posterior cross entropy, between the reference posterior, $p(h\mid y)$, and the posterior distribution provided by the method $m$, $p_m(h \mid y)>0$, can be alternatively decomposed into the following sum:

\begin{align}
  \mathcal{P}CE(p(h\mid y); p_m(h\mid y))  &=\mathbb{E}_Y(H(p(h\mid y))) +   \mathbb{E}_Y(D(p(h\mid y); p(h\mid x)))  + \mathbb{E}_{X_m}(D(p(h\mid x); p_m(h\mid  x))).  \nonumber
\end{align}
\end{proposition}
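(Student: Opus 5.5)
By Proposition~\ref{app:A2}, the posterior cross-entropy already equals $\mathbb{E}_{X_m}(D(p(h\mid x); p_m(h\mid x)))+\mathbb{E}_{X_m}(H(p(h\mid x)))$. The reliability-loss term appears unchanged in the statement. It therefore suffices to prove the identity for the potential score alone:
\[
\mathbb{E}_{X_m}\bigl(H(p(h\mid x))\bigr)=\mathbb{E}_Y\bigl(H(p(h\mid y))\bigr)+\mathbb{E}_Y\bigl(D(p(h\mid y);p(h\mid x))\bigr),
\]
where $x=x(y)$ denotes the unique element of $\mathcal{X}_m$ with $y\in\mathcal{Y}_m^x$. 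This is well defined because $\{\mathcal{Y}_m^x\}_{x\in\mathcal{X}_m}$ partitions $\mathcal{Y}$.

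\textbf{Step 1: combine the two $\mathbb{E}_Y$ terms.} I will write both expectations as sums over $y$ and add them. The $\log p(h\mid y)$ contributions cancel, leaving
\[
-\sum_{y\in\mathcal{Y}}p(y)\sum_{h}p(h\mid y)\log p(h\mid x(y)).
\]
This is just the analogue of \eqref{eq:eqGoodrisk}, with $p(h\mid x(y))$ playing the role of $p_m$.

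\textbf{Step 2: regroup by the partition.} I will split the sum over $y$ into blocks $\sum_{x\in\mathcal{X}_m}\sum_{y\in\mathcal{Y}_m^x}$. Within a block, $\log p(h\mid x)$ does not depend on $y$ and can be pulled out. By Proposition~\ref{app:A1}(1),
\[
\sum_{y\in\mathcal{Y}_m^x}p(y)\,p(h\mid y)=p(h,\mathcal{Y}_m^x)=p(\mathcal{Y}_m^x)\,p(h\mid\mathcal{Y}_m^x).
\]
The expression therefore becomes $-\sum_x p(\mathcal{Y}_m^x)\sum_h p(h\mid \mathcal{Y}_m^x)\log p(h\mid\mathcal{Y}_m^x)$, which is exactly $\mathbb{E}_{X_m}(H(p(h\mid x)))$. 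Substituting back into Proposition~\ref{app:A2} gives the claim.

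\textbf{Main obstacle.} The algebra is routine; the only delicate points are well-definedness. First, the logarithms and KL terms must be finite. I will assume, as in Propositions~\ref{app:A2}--\ref{app:A3}, that $p(y)>0$ for all $y$ and $p(h)>0$, and I will use the stated positivity of $p_m$. I will adopt the convention $0\log 0=0$, so that terms with $p(h\mid y)=0$ vanish. I also need to check that $p(h\mid y)>0$ forces $p(h\mid\mathcal{Y}_m^x)>0$; this holds because the latter is a $p(y)$-weighted average of the former. Second, the coarseness term uses the same $x(y)$ in $p(h\mid x)$, which is legitimate because each $y$ lies in exactly one cell of the partition.
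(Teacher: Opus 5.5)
Your proposal is correct and follows essentially the same route as the paper: you reduce via Proposition~\ref{app:A2} to the potential-score identity, cancel the $\log p(h\mid y)$ terms, and regroup the sum over the cells $\mathcal{Y}_m^x$ to obtain $\mathbb{E}_{X_m}(H(p(h\mid x)))$. The differences are cosmetic. The paper regroups through $p(h)\,p(y\mid h)$ and $p(\mathcal{Y}_m^x\mid h)$, whereas you aggregate $p(y)\,p(h\mid y)$ directly via Proposition~\ref{app:A1}(1), and you additionally check well-definedness, which the paper leaves implicit.
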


\begin{proof}
Following from Proposition \ref{app:A2}, it is sufficient  to prove that   $$\mathbb{E}_Y(H(p(h\mid y))) +   \mathbb{E}_Y(D(p(h\mid y); p(h\mid x)))= \mathbb{E}_{X_m} ( H(p(h\mid x))).$$
\begin{align}
\mathbb{E}_Y(H(p(h\mid y)))  +   \mathbb{E}_Y(D(p(h\mid y); p(h\mid x))) &=-\sum_y p(y)\sum_h p(h\mid y)\log  p(h\mid y)  +\sum_y p(y)\sum_h p(h\mid y)\log \frac{p(h\mid y)}{p(h\mid \mathcal{Y}_m^x)} \nonumber \\
&=-\sum_y p(y)\sum_h p(h\mid y)\log  p(h\mid \mathcal{Y}_m^x)  \nonumber \\
&=-\sum_h p(h)\sum_{X_m} \sum_{y \in \mathcal{Y}_m^x} p(y\mid h)\log  p(h\mid \mathcal{Y}_m^x) \nonumber \\
&=-\sum_h p(h)\sum_{x\in \mathcal{X}_m}    p(\mathcal{Y}_m^x\mid h)\log  p(h\mid \mathcal{Y}_m^x) \nonumber \\
&=- \sum_x p(\mathcal{Y}_m^x) \sum_h p(h\mid\mathcal{Y}_m^x )    \log  p(h\mid \mathcal{Y}_m^x) \nonumber\\
&= \mathbb{E}_{X_m} ( H(p(h\mid  \mathcal{Y}_m^x))). \nonumber
\end{align}

\end{proof}

 \section{How to re-calibrate posterior distribution}\label{sec:cal}
 \vspace*{12pt}
 
Calibration techniques are methods that aim at estimating the so-called \emph{calibration function},  transforming a posterior $p_m$ into an estimate $p_m^{cal}$ of its well-calibrated version $p(h\mid \mathcal{Y}_m^x)$. 
Clearly, the estimation would lead to a more reliable posterior, but never attains the perfectly well-calibrated posterior. 

If the size of each $\mathcal{Y}_m^x$ are large enough, it is always possible to proceed to what we will refer to as the \emph{empirical calibration}, which is a Monte Carlo estimate of $p(h| \mathcal{Y}^x_m)$ for every $x \in \mathcal{X}_m$:

\begin{eqnarray}
\label{eq:empcal}
\widehat{p}_m^{emp}(h_p|\mathcal{Y}^{x}_m)=\frac{\sum \limits_{i=1}^N \mathbb{1} \{h^{(i)}=h_p,y^{(i)} \in \mathcal{Y}^{x}_m \}}{\sum \limits_{i=1}^N \mathbb{1}\{y^{(i)} \in \mathcal{Y}^{x}_m \}},\quad \forall x \in \mathcal{X}_m.
\end{eqnarray}

However, if the number of observations in each set $\mathcal{Y}_m^x$ is not large enough to guarantee a good estimate of \eqref{eq:empcal}, the solution consists
of aggregating cases with similar values of $x$  and evaluating a single calibrated probability for all of them.

The issue is to find a proper aggregating rule. In the empirical binning methods \citep{zadrozny:2001, naeini:2015}, the bins are chosen with some ad hoc, predefined rule, but the results strongly depend on this rule.

Another solution is to use isotonic regression as a calibration approach \citep{barlow:1972, niculescu-mizil:2005, naeini:2016}. These methods learn the bin boundaries from the data to provide calibrated probabilities which comply with the ranking of the corresponding posterior value.
 
 The Pool Adjacent Violator (PAV) \citep{ayer:1955} is an algorithm to perform isotonic regression that optimizes the posterior cross entropy subject to the monotonicity constraint.

\subsection{PAV algorithm}
 
 Suppose, without loss of generality that the simulated posterior probability values $p_i:= p(h_p\mid y^{(i)})$ can be ordered in a sequence 
 $$ 0< p_1\leq p_2\leq ... \leq p_n < 1.$$
 
The algorithm generates a sequence of (re)calibrated  and ordered posteriors
$$0<\hat{p_1}\leq \hat{p_2}\leq ... \leq \hat{p_n}<1$$ starting with the sequence of empirical posteriors (as in Equation \eqref{eq:empcal}) $\tilde{p_1}, \tilde{p_2}, ..., \tilde{p_n}$, and replacing every $\tilde{p_i}$ that violates monotonicity ($\tilde{p_i}> \tilde{p}_{i+1})$ and all the subsequent equal values with the empirical frequencies of the corresponding $h_i=\mathbb{1}(h^{(i)}=h_p)$.

It follows that after calibration of the PAV $\mathcal{X}_{m,cal}\neq \mathcal{X}_{m}$ and in particular $|\mathcal{X}_{m,cal}|\leq |\mathcal{X}_{m}|$. 

Notice that the algorithm was developed to solve the isotone regression task: given a training set $(p_i, h_i)$ finding a map $m:p_i\rightarrow \hat{p_i}:=m(p_i)$, with corresponding non-negative weights $w_i$, that minimizes \begin{equation}\label{p}
    \sum_i^n w_i(h_i-\hat{p_i})^2,
\end{equation} and preserve the order of the $p_i$. 
As evident from \eqref{p}, such a map will optimize the empirical average Brier score but it can be proved that it optimizes all binary proper scoring rules  \citep{brummer:2013}.

We use a  slightly modified version of the standard PAV algorithm  that handles ties as described by \cite{deleeuw:2009}, ensuring that equal values of $x_m$	
  receive equal calibrated values $x_m^{cal}$.

\section{Probability of a random match in the rare type match case}\label{sec:rmp}
\vspace*{12pt}
Let $i\in \{1,..., K\}$ index the different haplotypes in the simulated population and $n$ the size of the sampled dataset $ \mathcal{Y}_n= \{y_1,...,y_n\}$. The following holds only assuming sample with replacement as an approximation of the real one that is without replacement.

The following result holds under i.i.d. sampling with replacement, used here as an approximation to the actual sampling scheme without replacement

\begin{align}
\label{eq:rmp}
\Pr(Y_{n+2}=Y_{n+1}\mid Y_{n+1}\notin \mathcal{Y}_n)&=\frac{\Pr(Y_{n+2}=Y_{n+1}, Y_{n+1}\notin \mathcal{Y}_n)}{\Pr( Y_{n+1}\notin \mathcal{Y}_n)} \notag \\&=
\frac{\sum_{i=1}^K \Pr(Y_{n+2}=Y_{n+1}, Y_{n+1}\notin \mathcal{Y}_n\mid Y_{n+1}=i)\Pr(Y_{n+1}=i)}{\sum_{i=1}^K  \Pr( Y_{n+1}\notin \mathcal{Y}_n\mid Y_{n+1}=i)\Pr(Y_{n+1}=i)}\notag\\&=
\frac{\sum_{i=1}^K \Pr(Y_{n+2}=i, i \notin \mathcal{Y}_n\mid Y_{n+1}=i)\Pr(Y_{n+1}=i)}{\sum_{i=1}^K  \Pr( i\notin \mathcal{Y}_n\mid Y_{n+1}=i)\Pr(Y_{n+1}=i)}
\notag\\&=
\frac{\sum_{i=1}^K \Pr(Y_{n+2}=i ) \Pr(i \notin \mathcal{Y}_n)\Pr(Y_{n+1}=i)}{\sum_{i=1}^K  \Pr( i\notin \mathcal{Y}_n)\Pr(Y_{n+1}=i)}\notag\\&=
\frac{\sum_{i=1}^K f_i \times (1-f_i)^n\times f_i}{\sum_{i=1}^K (1-f_i)^n\times f_i}\\ \nonumber &= 
\frac{\sum_{i=1}^K   (1-f_i)^n f_i^2}{\sum_{i=1}^K (1-f_i)^n f_i} 
\end{align}

 where $f_i$ 	denote the population frequency of haplotype $i$, with $\sum f_i=1$, and in the second to last equality, we exploited the independence property due to the sampling with replacement.

\section{Br\"ocker's bias}
\label{App:H}
\vspace*{12pt}
 As reported in \cite{brocker:2012}, the accuracy of these empirical quantities as estimates of their theoretical counterparts depends both on the sample size $N$ and the cardinality of the output space $|\mathcal{X}_m|$. Notably, they are not unbiased estimators. Specifically, the empirical reliability and resolution tend to overestimate their true values, while the empirical potential score tends to underestimate it:

 $$\mathbb{E}{(\mathcal{E}RL_m)}= RL_m+\frac{|\mathcal{X}_m|}{2N}$$

 $$\mathbb{E}{(\mathcal{E}RES_m)}= RES_m +\frac{|\mathcal{X}_m|-1}{2N}$$

 $$\mathbb{E}{(\mathcal{E}PS_m)}= PS_m -\frac{|\mathcal{X}_m|}{2N}.$$

This implies that while the overall empirical $\mathcal{E}CE_m$ may still be a good approximation of the true $\mathcal{P}CE$, caution is warranted when interpreting the individual components. In particular, evaluations of resolution, reliability, and potential score become less reliable for methods that produce a large number of distinct outputs.

\end{document}